\newcommand{\mc}{\mathcal}            
\newcommand{\G}{\mathcal{G}}   
\newcommand{\W}{\mathrm{W}}  
\newcommand{\ew}{\mathsf{w}}   
\newcommand{\ot}{\leftarrow}    
\newcommand{\e}{\varepsilon}           
\newcommand{\matr}[1]{\mathsf{#1}} 
\newcommand{\I}{\matr{I}}                  
\renewcommand{\OE}{\mathrm{OE}} 
\renewcommand{\H}{\matr{H}}   	 
\newcommand{\C}{\matr{C}}   		 
\newcommand{\M}{\matr{M}}   		 
\newcommand{\A}{\matr{A}}   		 
\newcommand{\T}{\matr{T}}   		 
\renewcommand{\v}{\matr{v}}      	 
\newcommand{\mG}{\matr{G}}      	 
\newcommand{\rmi}{\mathrm{i}}   
\theoremstyle{definition}
\newtheorem{definition}{Definition}[section]
\newtheorem{example}{Example}[section]
\newtheorem{remark}{Remark}[section]
\theoremstyle{theorem}
\newtheorem{theorem}{Theorem}[section]
\newtheorem{lemma}{Lemma}[section]
\newtheorem{proposition}{Proposition}[section]
\newcommand{\CycleTri}{{\,\atop \mathord{\includegraphics[height=7ex]{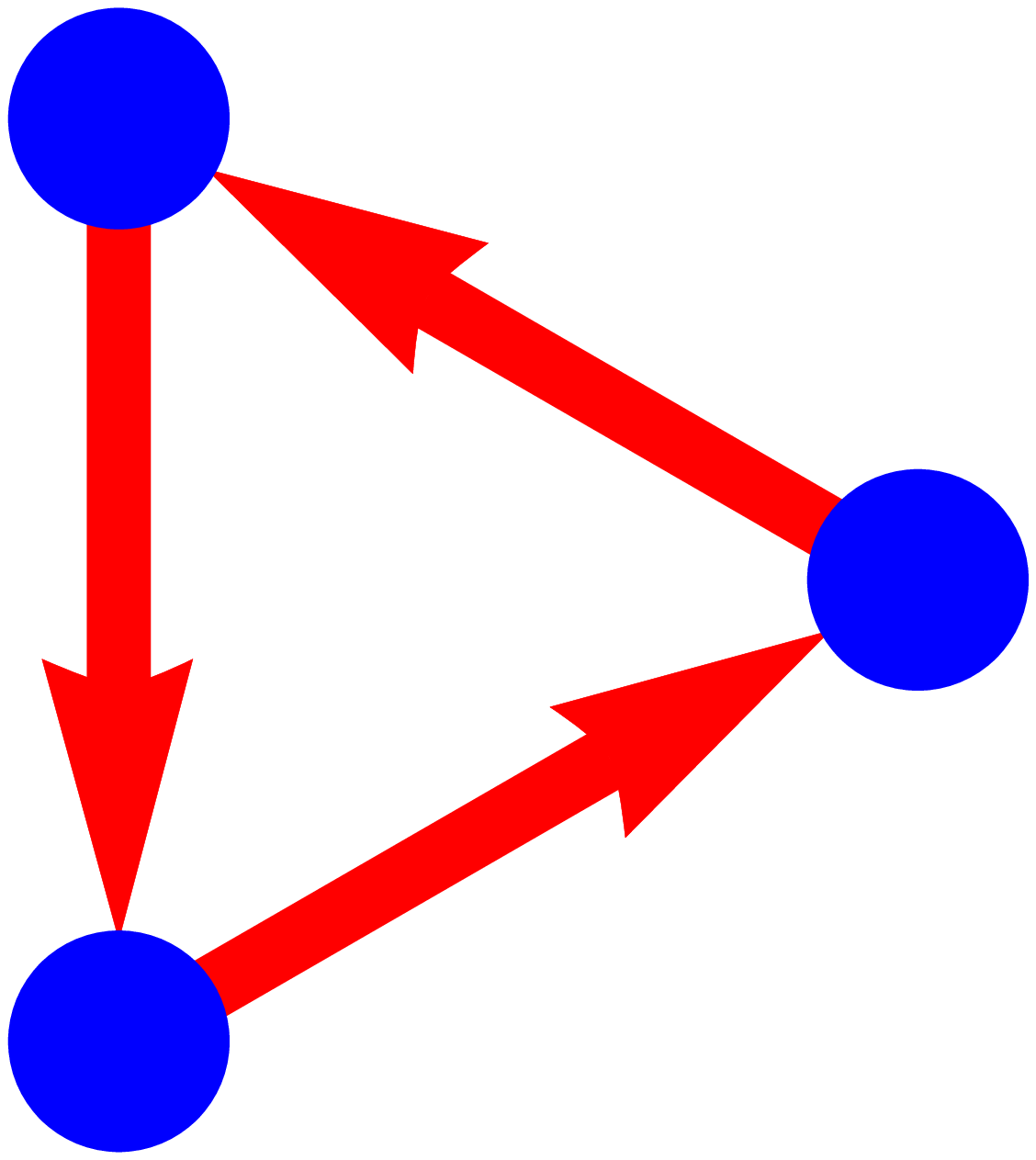}}}}
\newcommand{\PathTwotoOne}{{\,\atop \mathord{\includegraphics[height=7ex]{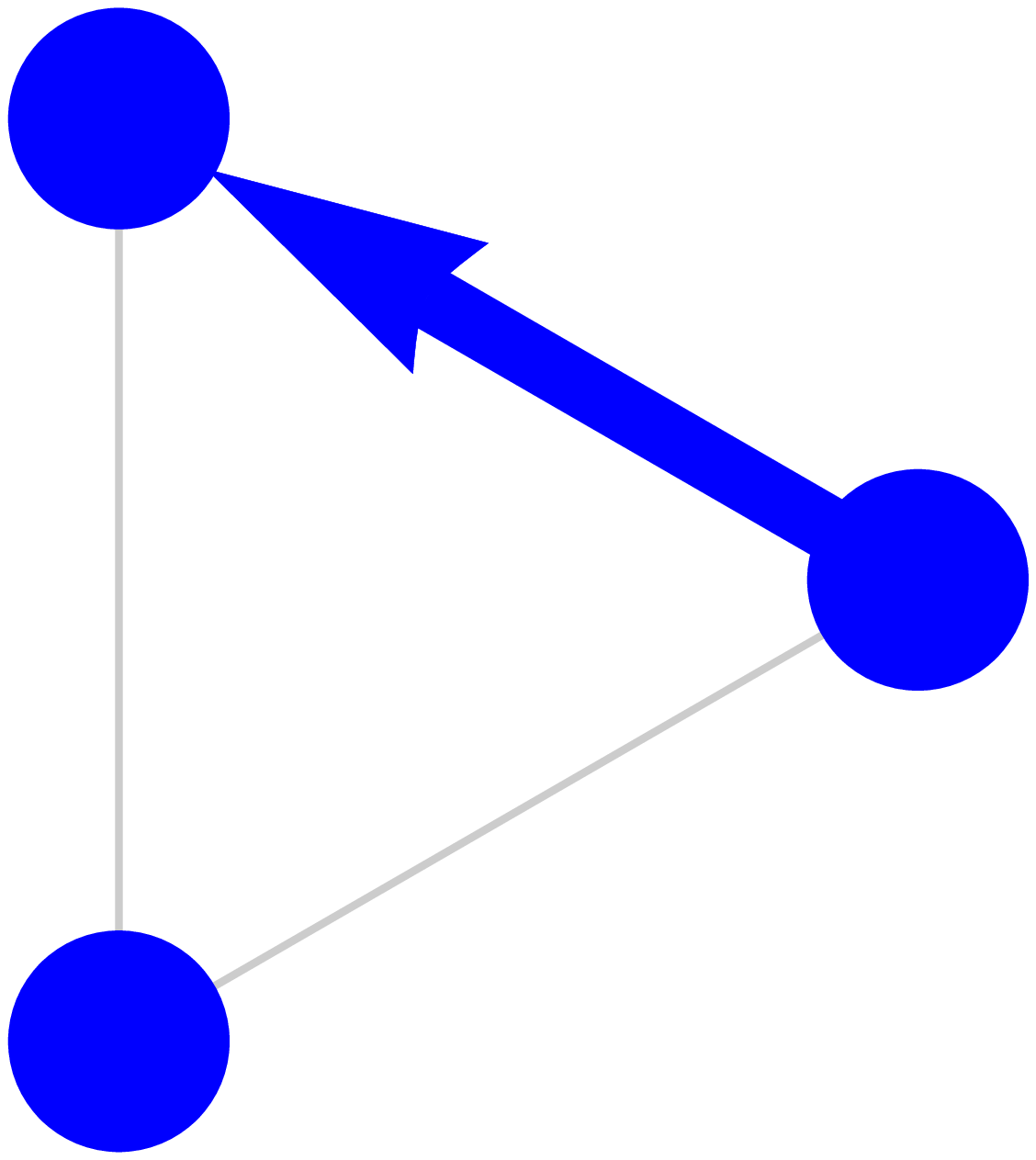}}}}
\newcommand{\PathThreeTwoOne}{{\,\atop \mathord{\includegraphics[height=7ex]{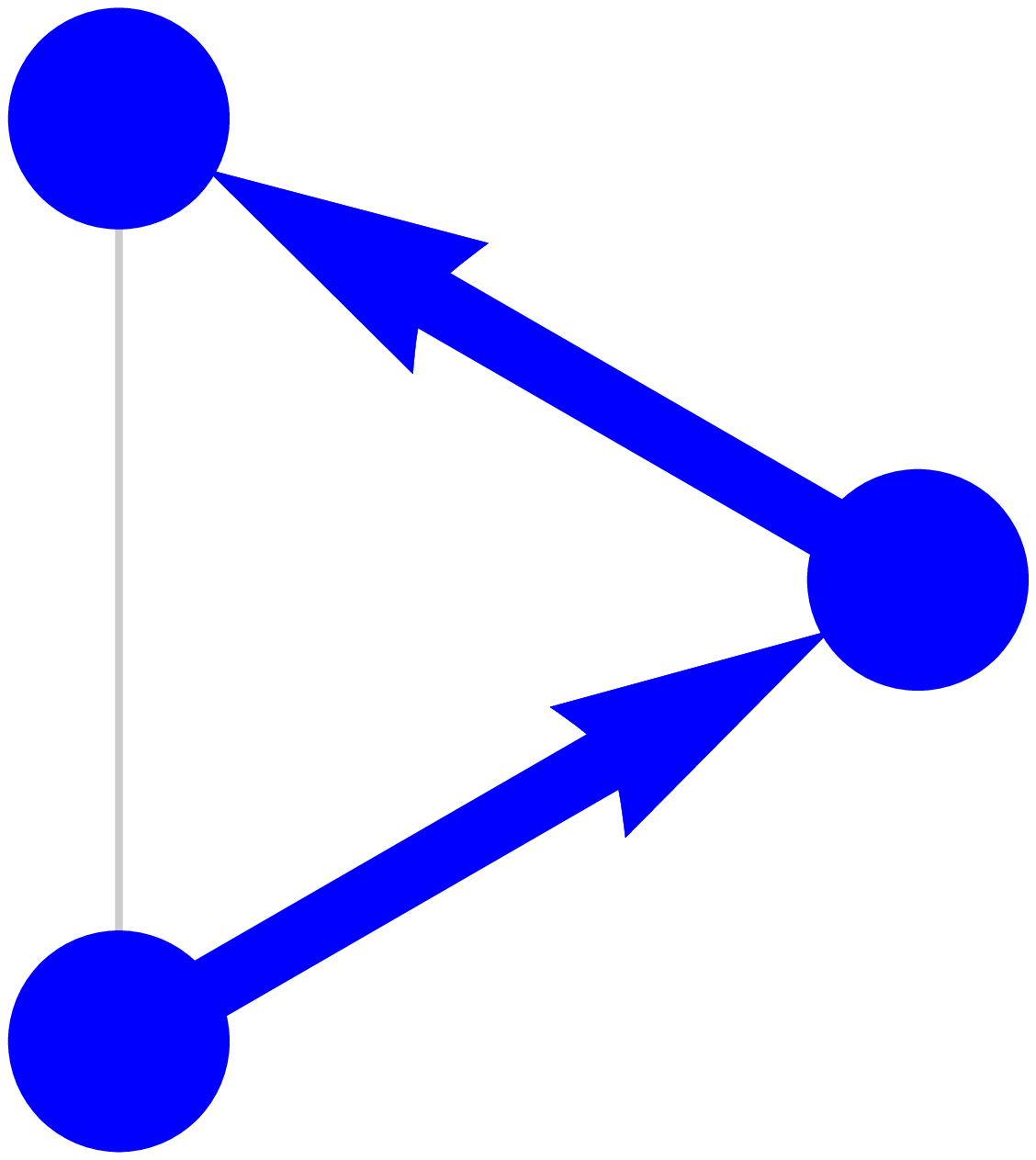}}}}
\newcommand{\PathOneThreeTwo}{{\,\atop \mathord{\includegraphics[height=7ex]{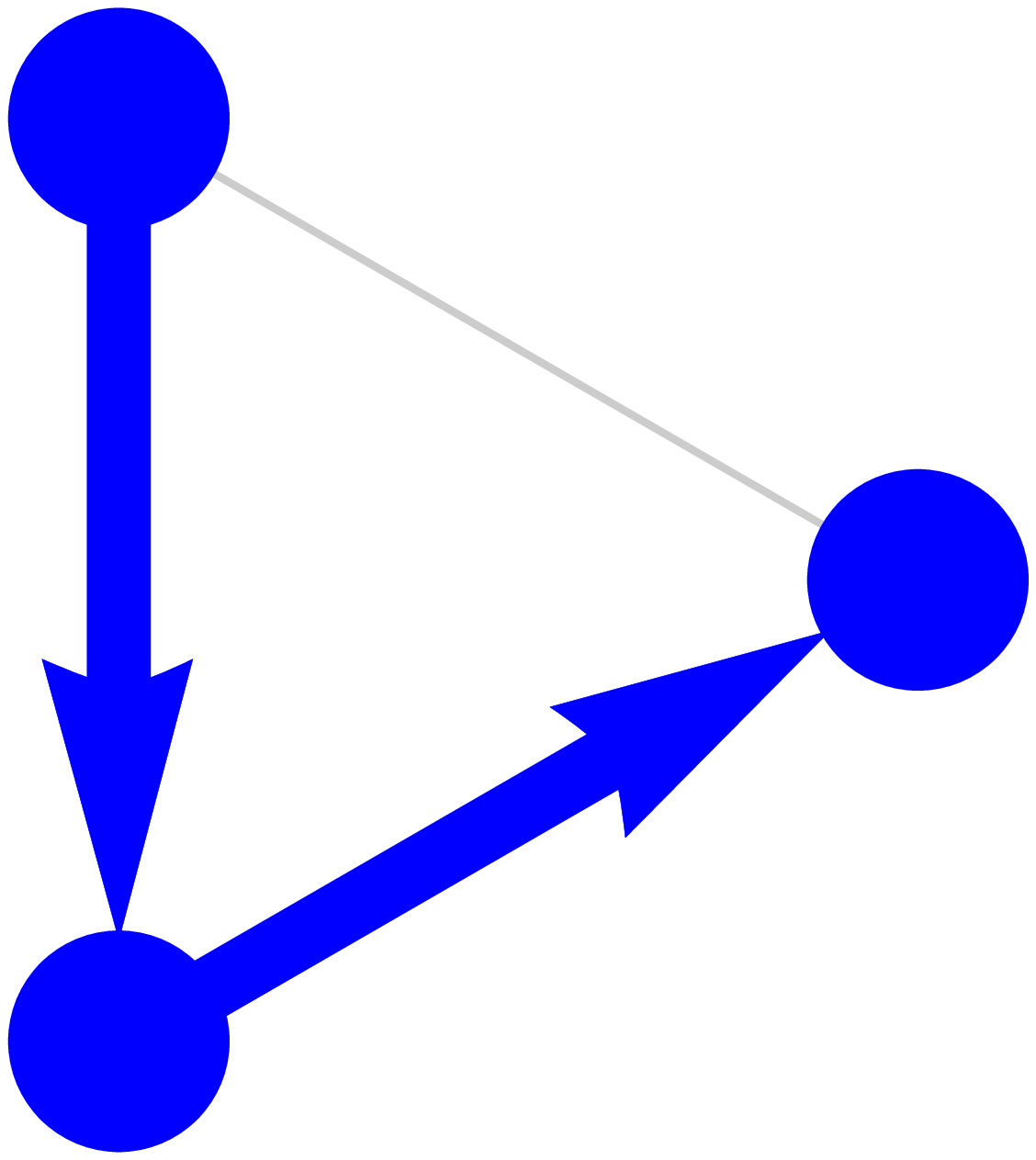}}}}
\newcommand{\PathThreeTwo}{{\,\atop \mathord{\includegraphics[height=7ex]{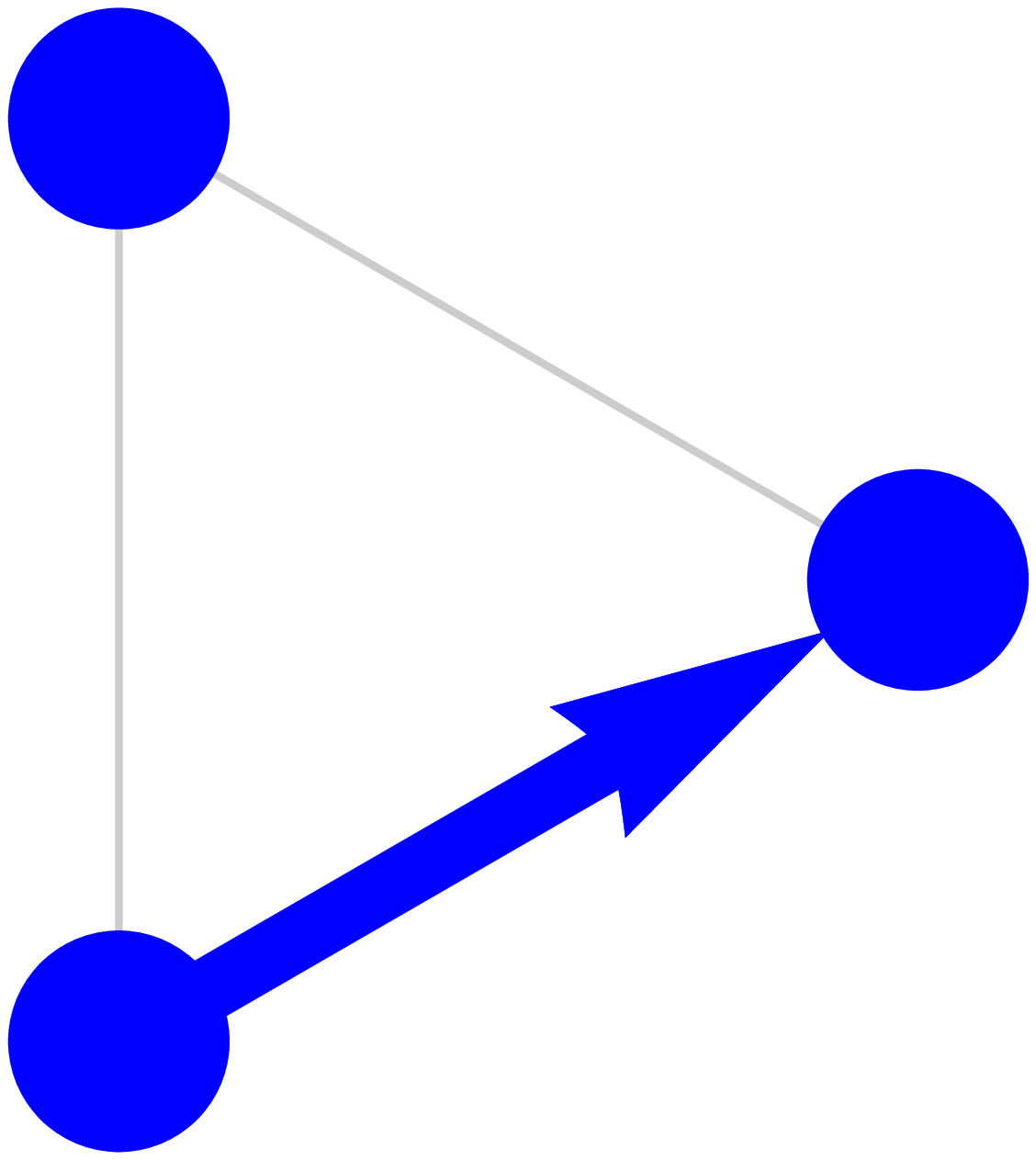}}}}
\newcommand{\PathOneThree}{{\,\atop \mathord{\includegraphics[height=7ex]{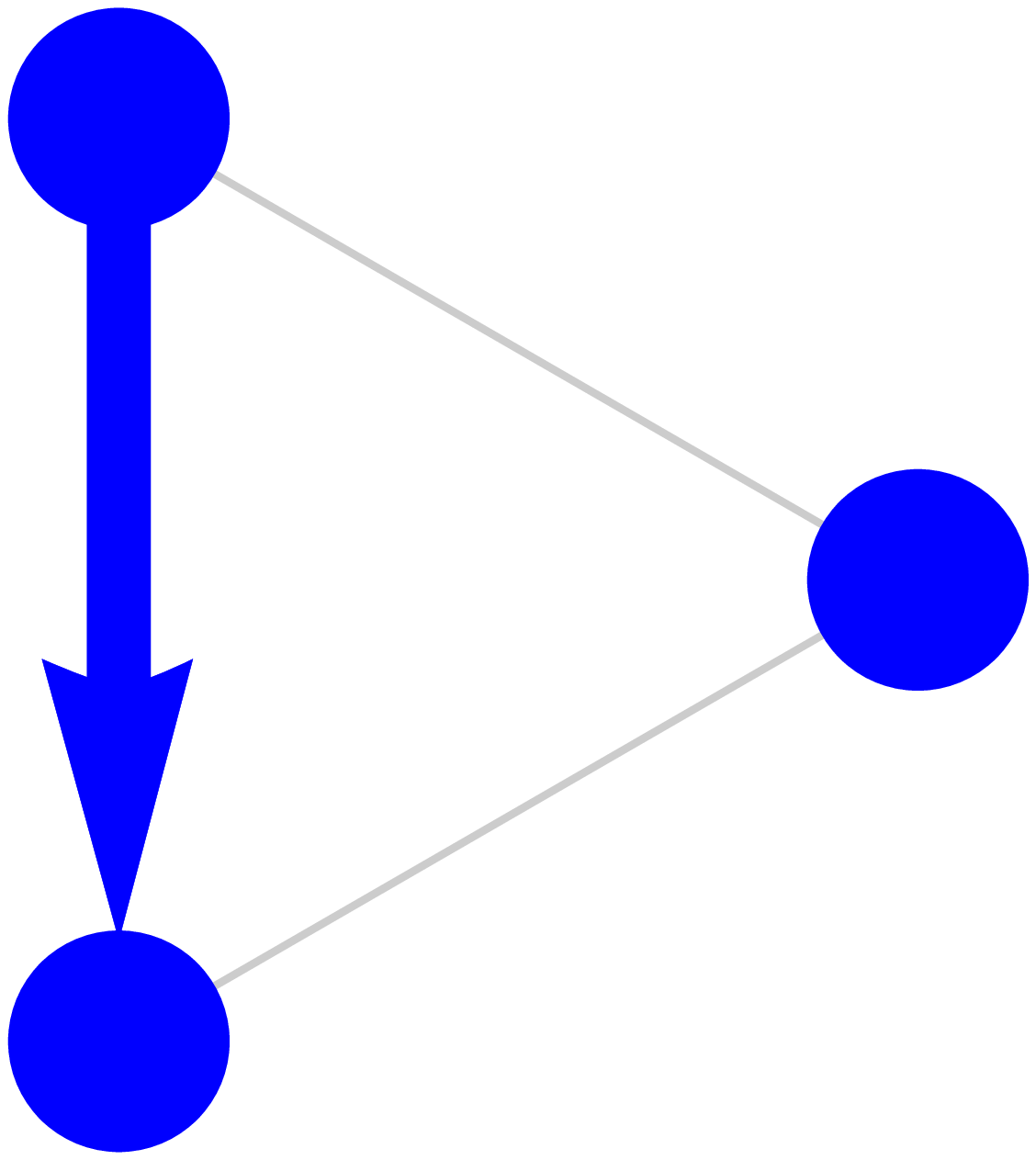}}}}
\newcommand{\PathTwoOneThree}{{\,\atop \mathord{\includegraphics[height=7ex]{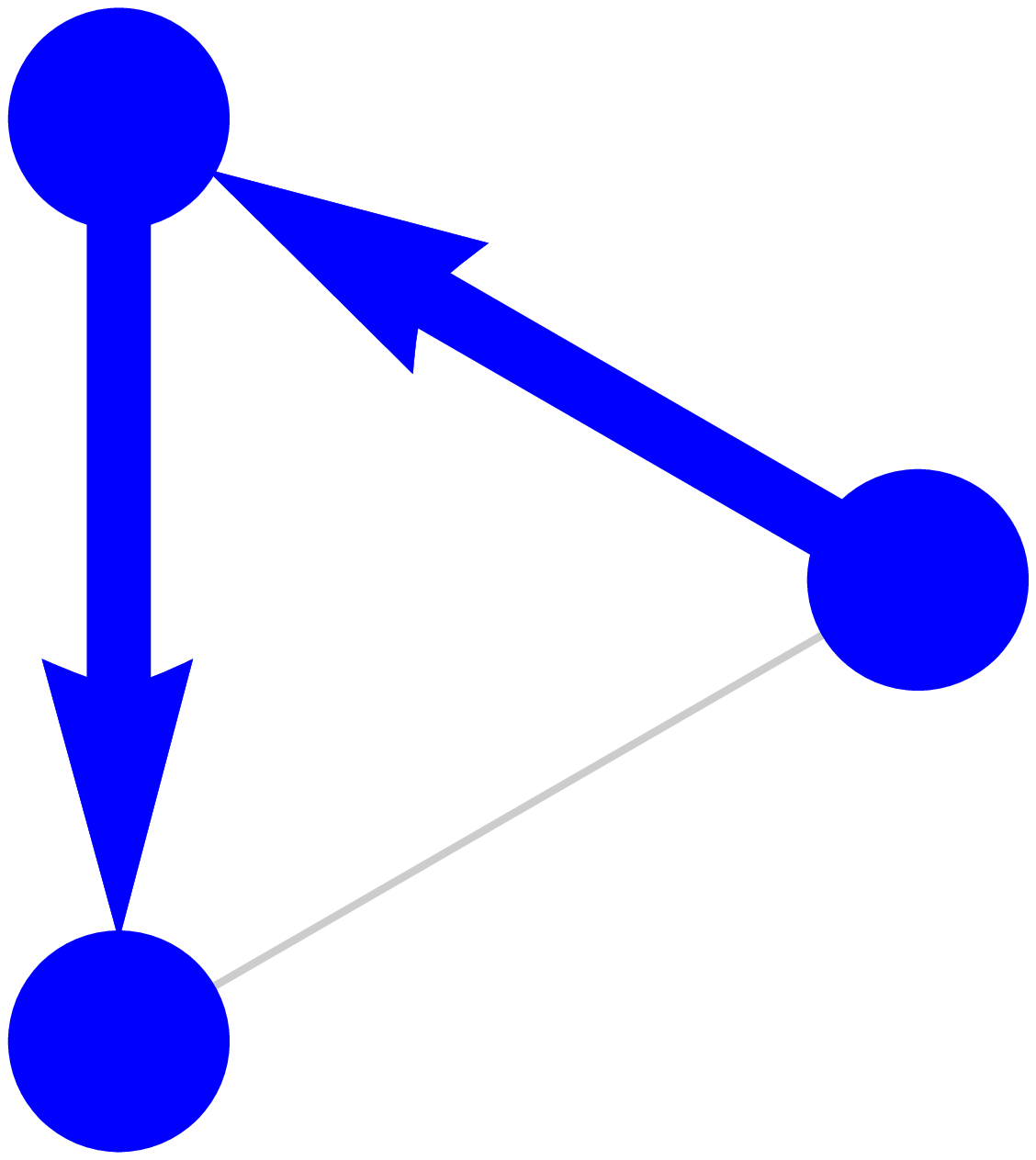}}}}
\newcommand{\BackTrackOneTwo}{{\,\atop \mathord{\includegraphics[height=7ex]{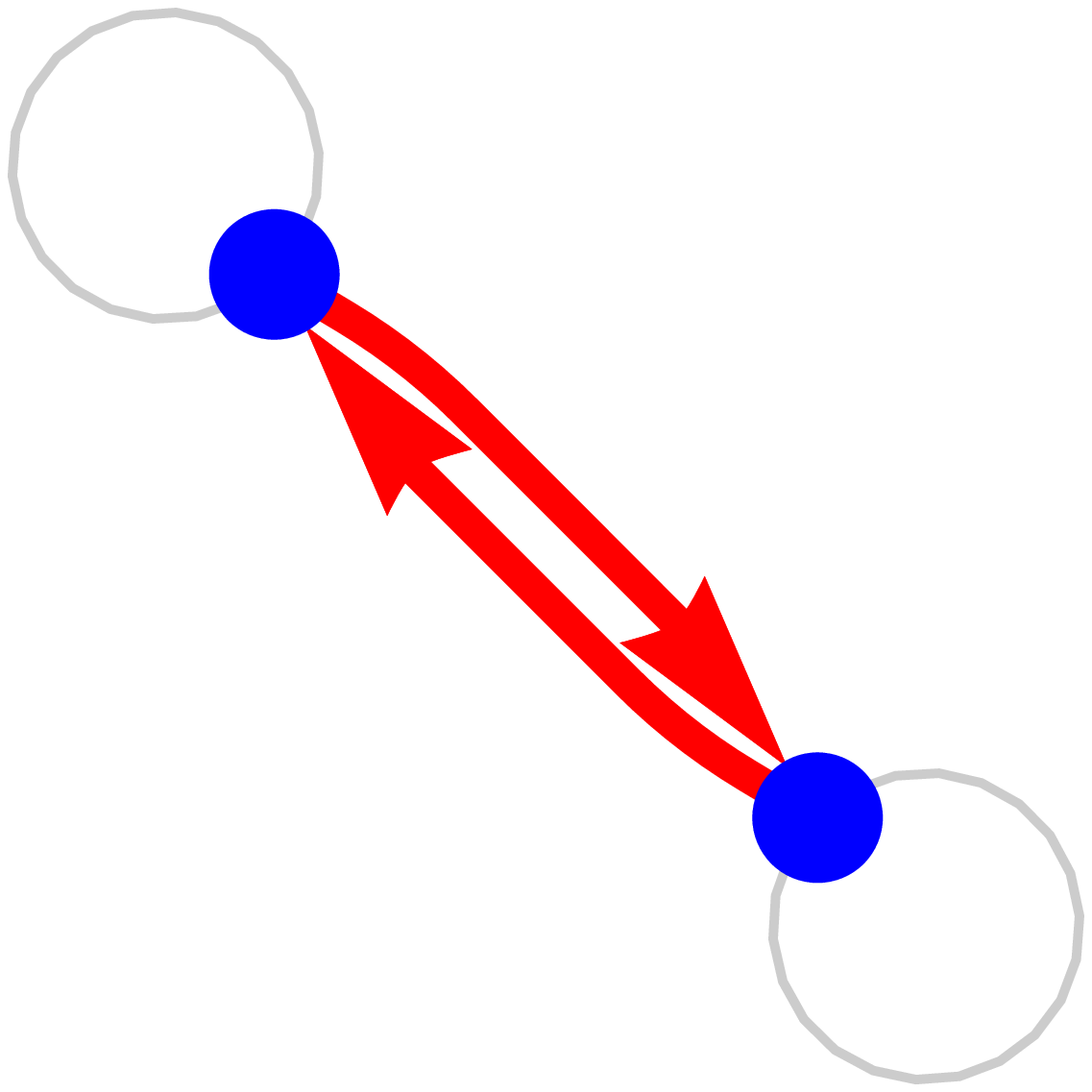}}}}
\newcommand{\BackTrackTwoOne}{{\,\atop \mathord{\includegraphics[height=7ex]{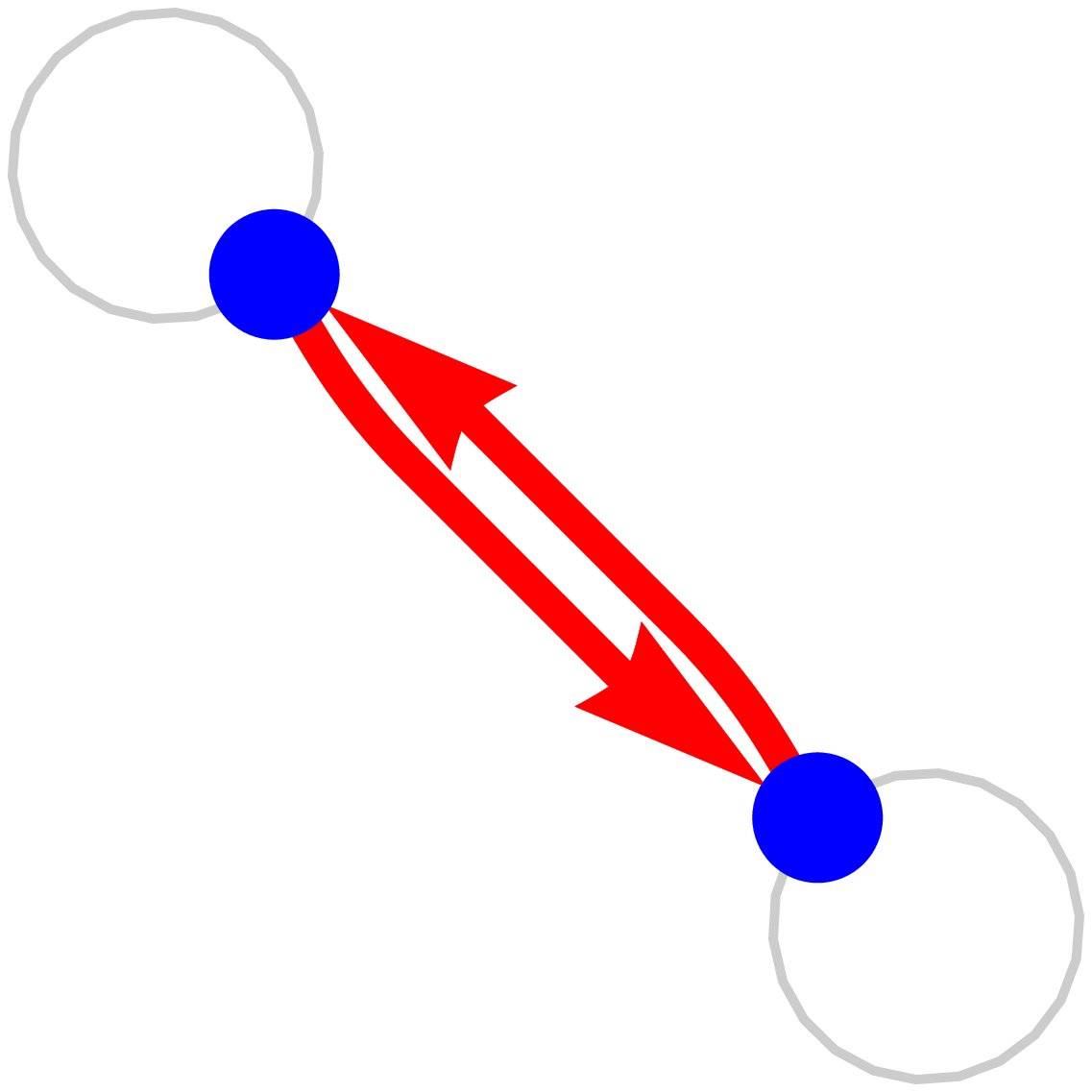}}}}
\newcommand{\LoopOneKTwo}{{\,\atop \mathord{\includegraphics[height=7ex]{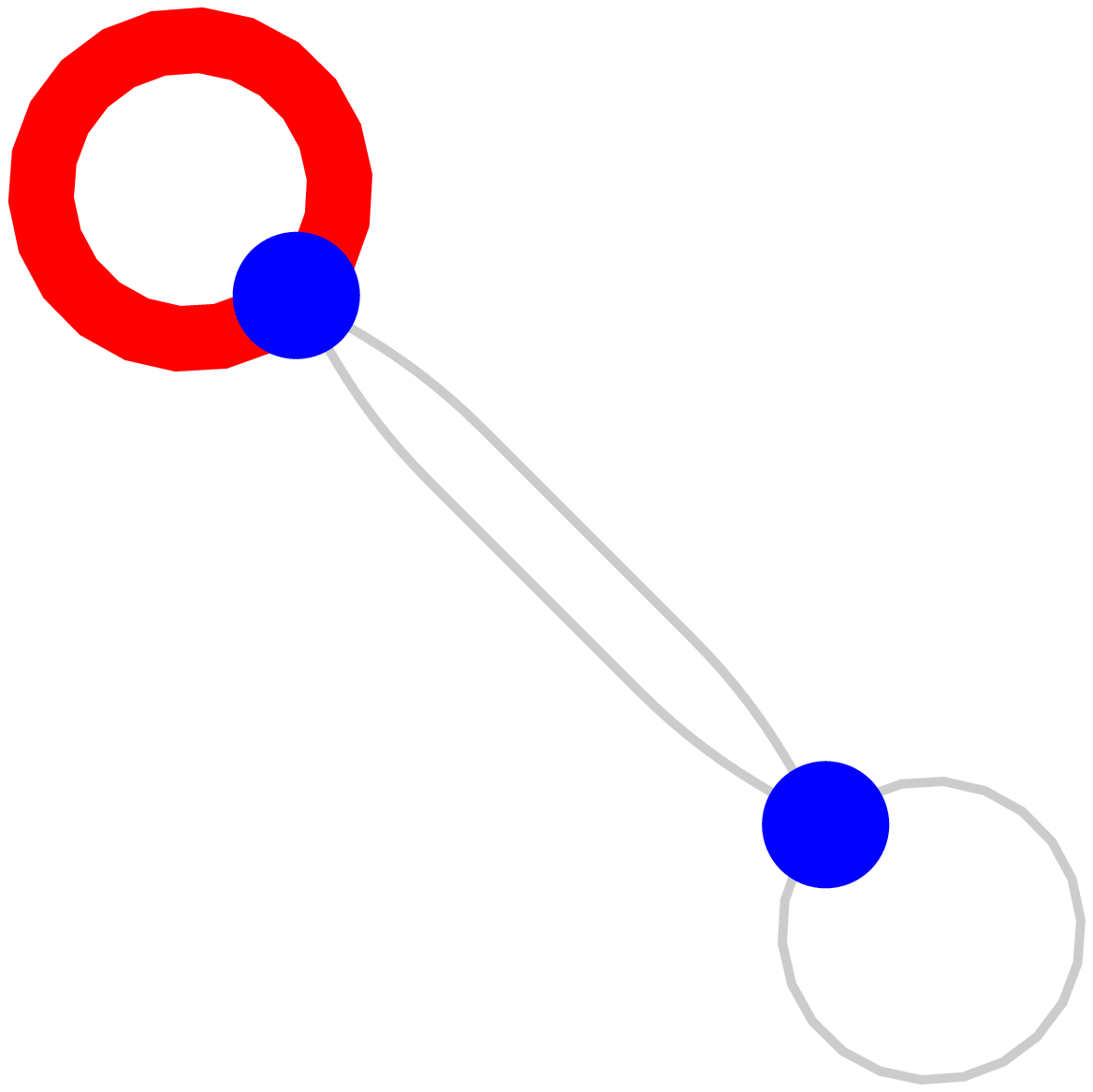}}}}
\newcommand{\LoopTwoKTwo}{{\,\atop \mathord{\includegraphics[height=7ex]{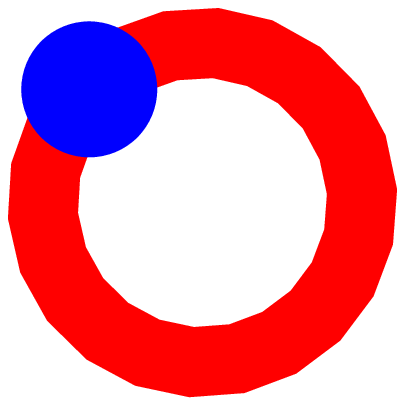}}}}
\newcommand{\LoopTwoTwoKTwo}{{\,\atop \mathord{\includegraphics[height=7ex]{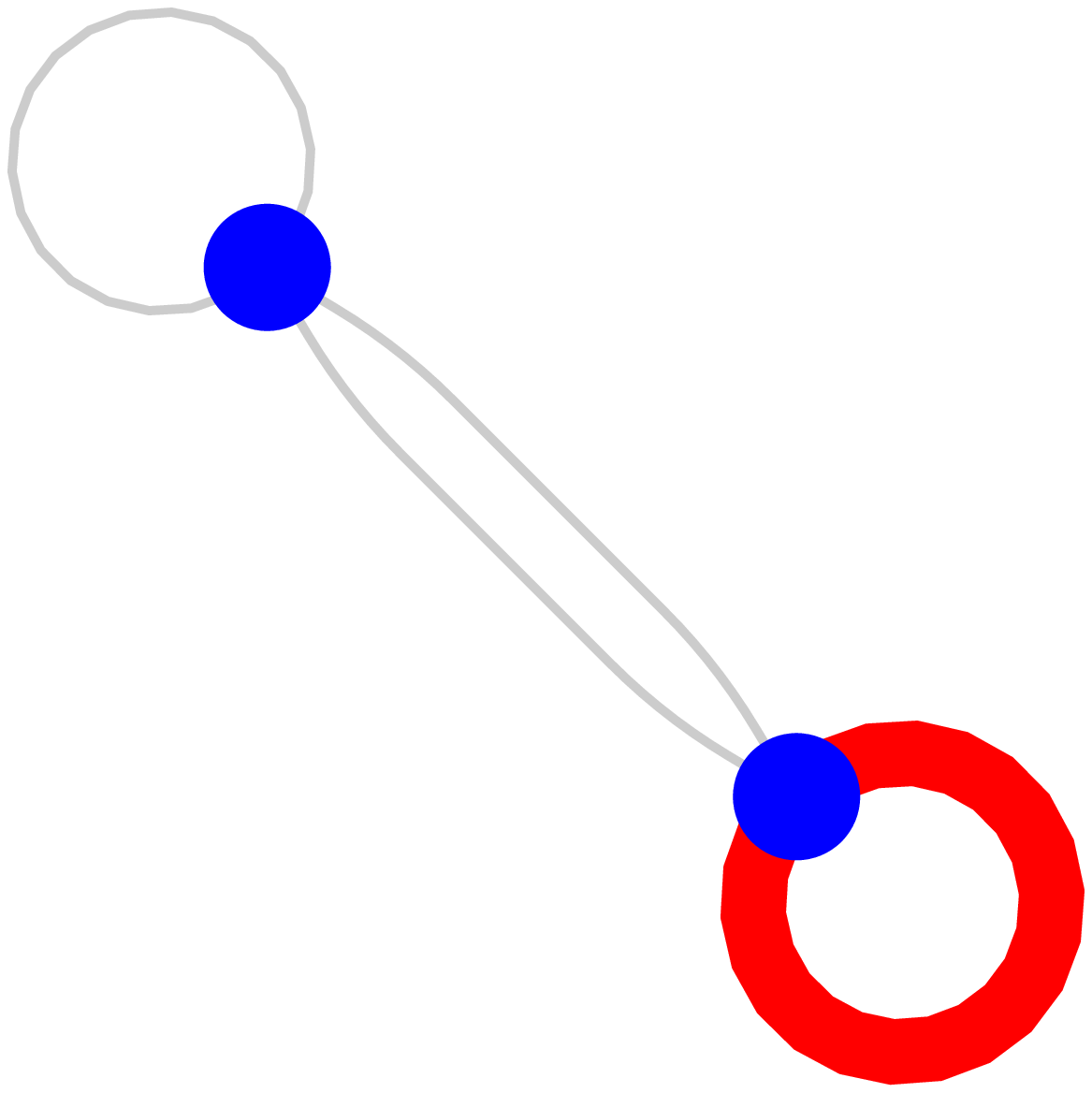}}}}
\newcommand{\PathOneTwo}{{\,\atop \mathord{\includegraphics[height=7ex]{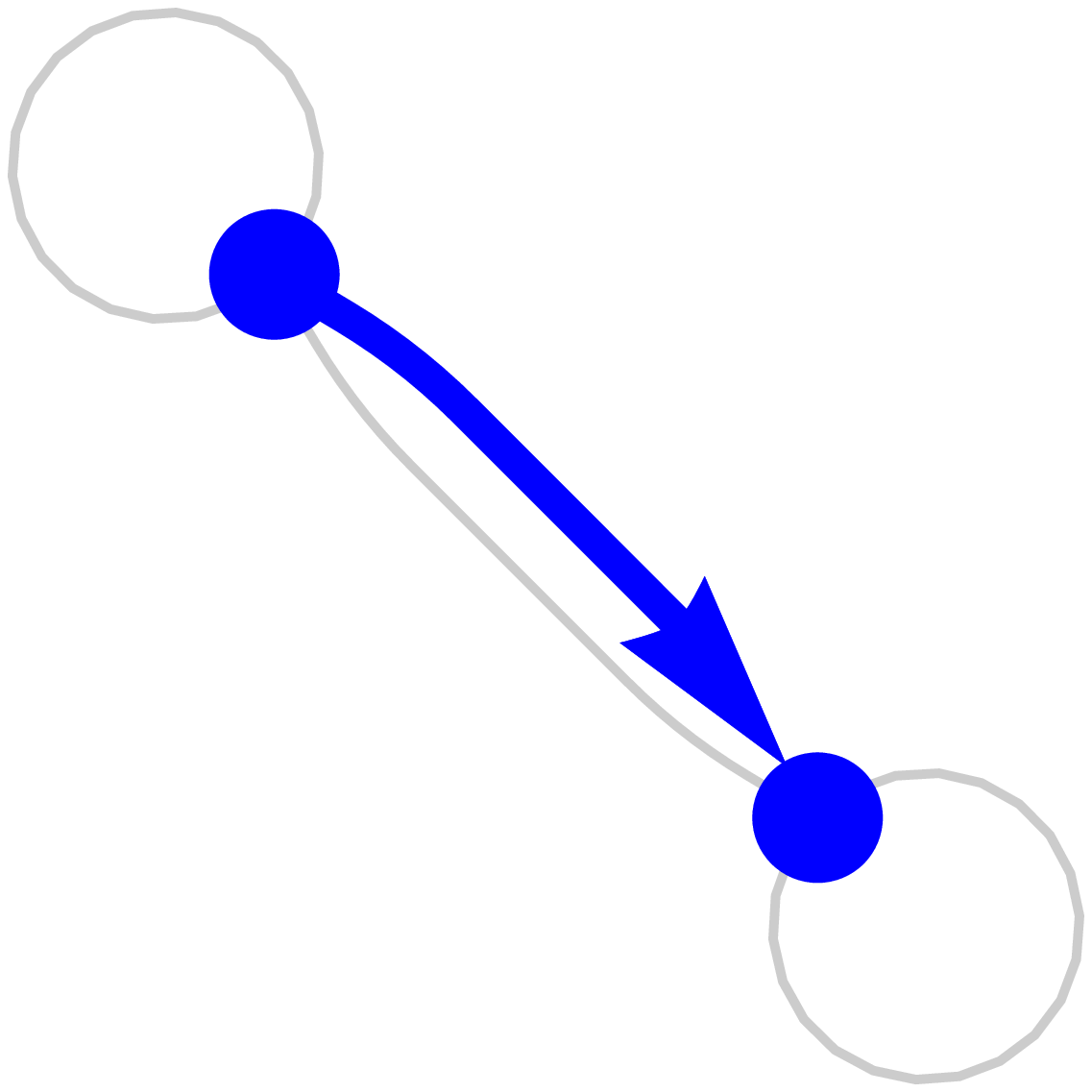}}}}
\newcommand{\PathTwoOne}{{\,\atop \mathord{\includegraphics[height=7ex]{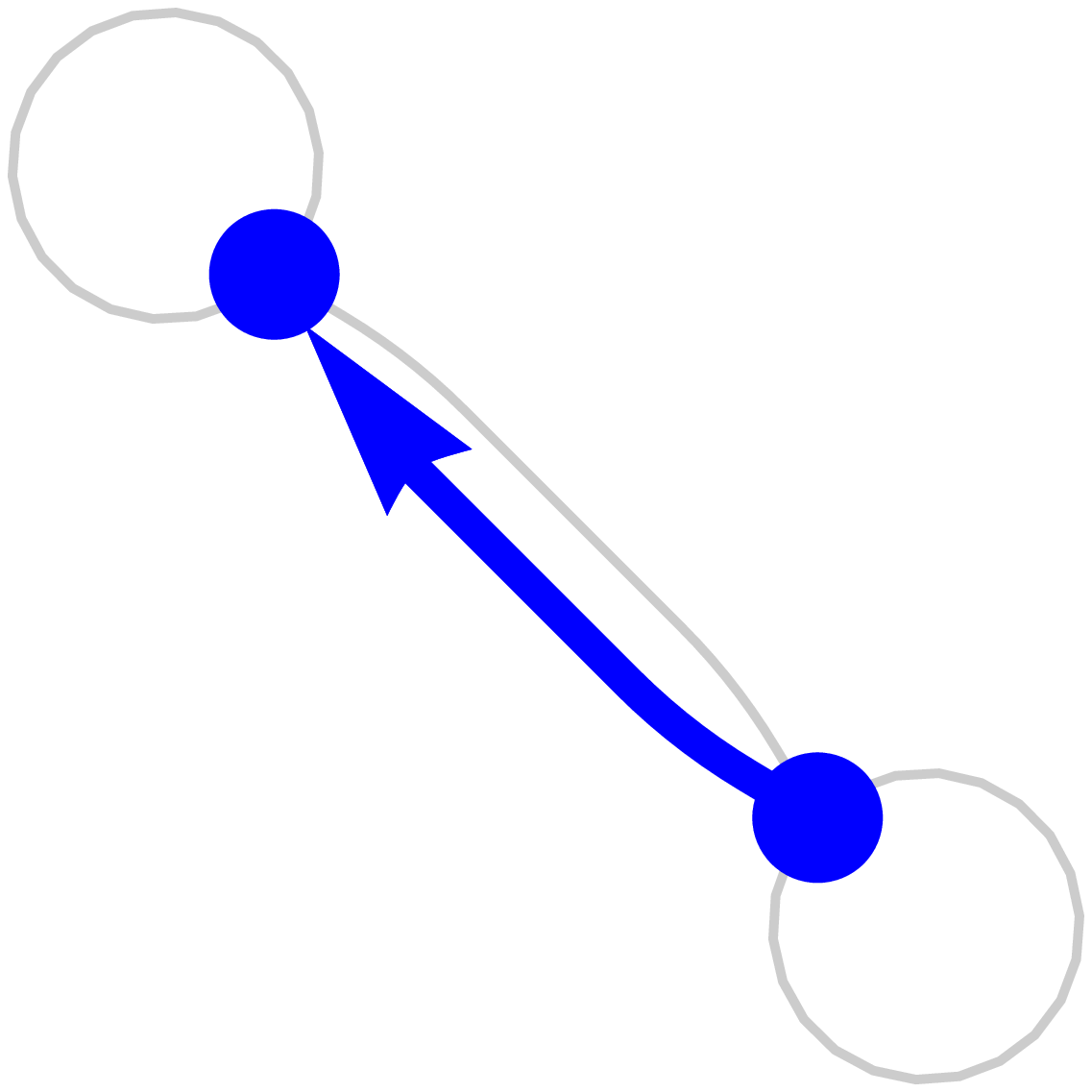}}}}
\journal{Linear Algebra and its Applications}
\begin{document}

\begin{frontmatter}



\title{An Exact Formulation of the Time-Ordered Exponential using Path-Sums}

\author[Oxford]{P.-L. Giscard}
\ead{p.giscard1@physics.ox.ac.uk}
\author[Colby]{K. Lui}
\author[Munich]{S.~J.~Thwaite}
\author[Oxford,Singapore]{D. Jaksch}
\address[Oxford]{University of Oxford, Department of Physics, Clarendon Laboratory, Oxford OX1 3PU, United Kingdom}
\address[Colby]{Colby College, 4000 Mayflower Hill Dr, Waterville, ME 04901, United States}
\address[Munich]{Department of Physics and Arnold Sommerfeld Center for Theoretical Physics, 
Ludwig-Maximilians-Universit\"{a}t M\"{u}nchen, Theresienstra{\ss}e~37, 80333 Munich, Germany}
\address[Singapore]{Centre for Quantum Technologies, National University of Singapore, 3 Science Drive 2, Singapore 117543}


\begin{abstract}
We present the path-sum formulation for $\OE[\H](t',t)=\mathcal{T}\,\text{exp}\big(\int_{t}^{t'}\!\H(\tau)\,d\tau\big)$, the time-ordered exponential of a time-dependent matrix $\H(t)$. The path-sum formulation gives $\OE[\H]$ as a branched continued fraction of finite depth and breadth. The terms of the path-sum have an elementary interpretation as self-avoiding walks and self-avoiding polygons on a graph. Our result is based on a representation of the time-ordered exponential as the inverse of an operator, the mapping of this inverse to sums of walks on graphs and the algebraic structure of sets of walks. We give examples demonstrating our approach. We establish a super-exponential decay bound for the magnitude of the entries of the time-ordered exponential of sparse matrices. We give explicit results for matrices with commonly encountered sparse structures.
\end{abstract}

\begin{keyword}
Time-ordered exponential \sep path-ordered exponential \sep  path-sum \sep differential equation \sep finite continued fraction \sep super-exponential decay
\end{keyword}
\end{frontmatter}

\section{Context}\label{context}
The time-ordered exponential function $\OE[\H](t',t)=\mathcal{T}\,\text{exp}\big(\int_{t}^{t'}\!\H(\tau)\,d\tau\big)$, also known as path-ordered exponential, is the unique solution of the system of differential equations
\begin{equation}\label{DiffEq}
\H(t')\OE[\H](t',t)=\frac{d}{dt'}\OE[\H](t',t)
\end{equation}
such that $\OE[\H](t',t')=\I$ is the identity at all times.  We take $\H\in\mathbb{C}^{n\times n}[I]$, $n\in\mathbb{N}\backslash\{0\}$, to be a matrix depending smoothly on the continuous variable $t\in I$, which we call time without loss of generality. 
In spite of the importance of the time-ordered exponential as a solution of systems of differential equations with variable coefficients Eq.~(\ref{DiffEq}), it is rarely studied within the same framework as the more common matrix functions. It is for example absent from reference books on matrix functions, e.g. Refs. \cite{Golub,Higham2008}. Rather, many studies concerning the time-ordered exponential find their roots in physics, where it occurs abundantly: for example in non-equilibrium quantum many-body physics and quantum field theories.
Practically speaking, the time-ordered exponential is often calculated perturbatively via the Dyson series or Magnus series and related approaches \cite{Lam1998,Blanes2009}. A non-perturbative result for the time-ordered exponential, known as the time-dependent Dyson equation, also exists. Ultimately however, the terms involved in the Dyson equation (in particular the self-energy) are rarely known explicitly, and the equation is rather used as a starting point for approximations: for example in equilibrium and non-equilibrium Dynamical Mean Field Theory (DMFT) \cite{Georges1996,Aoki2013}.

In this work, we establish a universal non-perturbative formulation for the time-ordered exponential of a time-dependent matrix based on graph theoretic considerations. This result presents the time-ordered exponential as a branched continued fraction of finite depth and breadth, and permits analytical calculations. Our approach is based on three pillars: i) a representation of time-ordered exponentials as inverses of operators; ii) a mapping between such inverses and sums of walks on graphs; and iii) results on the algebraic structure of sets of walks.

This article is organized as follows. In \S\ref{requiredconcept} we introduce a mapping between time-dependent and time-continuous matrices, which we define. Time-continuous matrices incorporate time as a continuous row and column index in addition to discrete indices.
Using these, we reformulate in \S\ref{TimeContinuousInverse} the solution of the system of differential equations of Eq.~(\ref{DiffEq}) as the inverse of the time-continuous matrix constructed from $\I-\H(t)$, $\I$ being the identity. We show that this inverse always exists.
Exploiting the Neumann series representation of this inverse and the correspondence between matrix powers and walks on graphs, we obtain in \S\ref{PathSum} a graph theoretic formulation of the time-ordered exponential function called path-sum: an analytical expression involving only finitely many terms. We then present examples demonstrating our approach. In particular, we show that the time-dependent Dyson equation is the simplest non-trivial instance of the path-sum formula, corresponding to the situation where the underlying graph has only two vertices.
Using the main result of \S\ref{TimeContinuousInverse}, we obtain in \S\ref{decayprop} a super-exponential decay bound for the magnitude of the entries of the time-ordered exponentials of sparse matrices.  We present explicit bounds for commonly encountered sparse structures.
Finally, in \S\ref{discuss} we discuss the future prospects of graph theoretic methods in numerical calculations of the time-ordered exponential function.

\section{Required concepts}\label{requiredconcept}
In this section we establish that the time-ordered exponential arise from the inverse of a time-continuous inverse. In the next section we will exploit this result to obtain a graph theoretic interpretation for $\OE[\H](t',t)$.
%

\subsection{Time-dependent and time-continuous matrices}
Let $t$ and $t'$ be two real variables, called times for simplicity. We restrict $t$ and $t'$ to a finite interval $I\subseteq\mathbb{R}$. Let $\matr{m}[t',t]$ be a $n\times m$, $n,m\geq1$, complex-valued matrix of functions of \emph{at most two time variables} $t$ and $t'$ in $I$. We say that $\matr{m}[t',t]$ is a \emph{time-dependent matrix}. Unless necessary we shall drop the time-dependencies and write simply $\matr{m}$ for $\matr{m}[t',t]$. We denote by $\matr{m}(t_2,t_1)$ the ordinary matrix obtained upon evaluating $\matr{m}$ at $t'=t_2$ and $t=t_1$.

\vspace{2mm}From now on, we only consider time-dependent matrices such that for all $(t',t)\in I^2$, $\matr{m}(t',t)$ exists and has finite norm. We denote the set of all such matrices $\mathbb{C}^{n\times m}[I^2]$. An addition operation $+$ between matrices of $\mathbb{C}^{n\times m}[I^2]$ is naturally defined element-wise: that is, $\big(\matr{m}_1+\matr{m}_2\big)_{ij}=(\matr{m}_{1})_{ij}+(\matr{m}_2)_{ij}$. A product operation, denoted $\ast$, also exists and will be introduced below.

%

To any time-dependent matrix $\matr{m}\in\mathbb{C}^{n\times m}[I^2]$ we associate a unique object $\M\in \mathbb{C}^{n\times m}\times I^2$ via the map $\Phi$, which we define as follows:
\begin{subequations}
\begin{align}
\Phi:\quad\mathbb{C}^{n\times m}[I^2]\longrightarrow \,\,&\mathbb{C}^{n\times m}\times I^2,\\
\hspace{5mm}\matr{m}\longrightarrow \,\,&\Phi\big(\matr{m}\big)=\M:=\int_I dt'\! \int_I dt\,\, \matr{m}(t',t)\otimes  e_{t'} e_t^\dagger,\\
\intertext{or, adopting Dirac notation,}
\hspace{5mm}\matr{m}\longrightarrow\,\,&\Phi\big(\matr{m}\big)=\M:=\int_I dt'\! \int_I dt\,\, \matr{m}(t',t)\, |t'\rangle\langle t|.\label{ContFuncDecompo}
\end{align}
\end{subequations}
In these expressions, $e_{t'}\equiv |t'\rangle$ is a continuous vector identically 0 at all times except at $t'$ where it is infinite, $e_{t}^\dagger\equiv\langle t| $ is the dual of $e_{t}\equiv |t\rangle$, i.e.~its conjugate transpose and $e_{t}^\dagger e_{t'}=\delta(t-t')$, the Dirac delta function. Equation (\ref{ContFuncDecompo}) is the continuous analog of the representation of discrete matrices in terms of their entries, e.g. $\matr{B}=\sum_{i,j} b_{i,j}\, \e_i^\dagger \e_j\equiv\sum_{i,j}b_{i,j}|i\rangle\langle j|$.

We call the elements of $\mathbb{C}^{n\times m}\times I^2$ such as $\M$ the \emph{time-continuous matrices}. Since the map $\Phi$ is clearly bijective, with $\Phi^{-1}(\M)=\matr{m}$, time-continuous matrices are in one-to-one correspondence with time-dependent matrices. In addition, $\Phi$ is linear so $\Phi\big(\matr{m}_1+\matr{m}_2\big)=\Phi\big(\matr{m}_1\big)+\Phi\big(\matr{m}_2\big)$, $(\matr{m}_1,\matr{m}_2)\in\big(\mathbb{C}^{n\times m}[I^2]\big)^2$. Below we also construct a product operation between time-dependent matrices.

%

\subsection{Product operation for time-dependent matrices}
We define a product operation between time-dependent matrices product by requiring $\Phi$ be an homomorphism. Since $\Phi$ is bijective we can write
\begin{align}
\ast:\quad\mathbb{C}^{n\times m}[I^2]\times \mathbb{C}^{m\times p}[I^2]\longrightarrow \,\,&\mathbb{C}^{n\times p}[I^2],\nonumber\\
\hspace{3mm}(\matr{m}_2,\,\matr{m}_1)\longrightarrow \,\,&\matr{m}_2\ast\matr{m}_1:=\Phi^{-1}\big(\Phi(\matr{m}_2)\,.\,\Phi(\matr{m}_1)\big),\label{line2}
\end{align}
and indeed $\Phi(\matr{m}_2\ast\matr{m}_1)=\Phi(\matr{m}_2)\,.\,\Phi(\matr{m}_1)$.
In Eq.~(\ref{line2}), the $.$ symbol represents the usual matrix product for time-continuous matrices, which is readily obtained as the continuous analog of the matrix product between discrete matrices:
\begin{equation}\label{eq:TimeContinuousProduct}
\Phi(\matr{m}_2)\,.\,\Phi(\matr{m}_1)=\int_I dt'\! \int_I dt\,\, \left(\int_{t}^{t'}\!d\tau\,\matr{m}_2(t',\tau)\matr{m}_1(\tau,t)\right)\otimes  e_{t'} e_t^\dagger,
\end{equation}
where $\matr{m}_2(t',\tau)\matr{m}_1(\tau,t)$ denotes the usual product between finite discrete matrices. For clarity, we omit the $.$ symbol from now on.
Eqs.~(\ref{line2}, \ref{eq:TimeContinuousProduct}) lead to
\begin{equation}\label{astproddef}
\big(\matr{m}_2\ast\matr{m}_1\big)(t',t)=\int_{t}^{t'}\!\matr{m}_2(t',\tau)\matr{m}_1(\tau,t)\,d\tau.
\end{equation}
Although Eq.~(\ref{astproddef}) is reminiscent of a convolution, in general the $\ast$-product is not a convolution since neither $\matr{m}_1$ nor $\matr{m}_2$ need be functions of the time differences $\tau-t$ and $t'-\tau$, respectively. Eq.~(\ref{astproddef}) indicates that the $\ast$-product is a usual matrix multiplication along the (continuous) time index. In Appendix \ref{StarPowersInv} we give the explicit expression of $\matr{m}^{\ast n}$, the $n^{\text{th}}$ $\ast$-power of a time-dependent matrix $\matr{m}$, as well as a characterization of its $\ast$-inverse $\matr{m}^{\ast -1}$.\\

\noindent We gather our results concerning $\Phi$ in a proposition:
\begin{proposition}
The map $\Phi$ forms a ring isomorphism between $(\mathbb{C}^{n\times n}[I^2],+,\ast)$ and $(\mathbb{C}^{n\times n}\times I^2,+,.)$.
\end{proposition}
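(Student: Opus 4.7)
The proof plan is essentially a verification exercise, because the key pieces have already been assembled earlier in Section~\ref{requiredconcept}. I will proceed in four steps, reserving the one genuine subtlety (the multiplicative identity) for the end.

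First, I would verify that the codomain $(\mathbb{C}^{n\times n}\times I^2,+,.)$ is a ring. Elementwise addition gives an abelian group structure immediately. Associativity and distributivity of the time-continuous matrix product $.$ defined in Eq.~(\ref{eq:TimeContinuousProduct}) reduce, after swapping the order of the time integrals (Fubini applies, because entries are bounded on the finite interval $I$), to associativity and distributivity of the ordinary matrix product at each fixed pair of times. The multiplicative identity is the time-continuous matrix $\int_I dt\, \I\, |t\rangle\langle t|$, and its action as neutral element follows from $\langle t|t'\rangle = \delta(t-t')$.

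Second, I would transport this ring structure back along $\Phi$. Since $\Phi$ is already stated to be a bijection (its inverse simply reads off the matrix-valued function in front of $|t'\rangle\langle t|$) and linearity was already noted, the only remaining checks are that $\ast$ is well-defined on $\mathbb{C}^{n\times n}[I^2]$ (finite norm of $\matr{m}_2\ast\matr{m}_1$ at each $(t',t)$, which follows from boundedness on the compact interval $I$) and that the definition~(\ref{line2}) forces $\Phi(\matr{m}_2\ast\matr{m}_1)=\Phi(\matr{m}_2).\Phi(\matr{m}_1)$. This last equality is immediate by the very construction of $\ast$.

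Third, I would point out that the ring axioms for $(\mathbb{C}^{n\times n}[I^2],+,\ast)$ now follow for free: for any axiom of the form $P(\matr{m}_1,\dots,\matr{m}_k)=Q(\matr{m}_1,\dots,\matr{m}_k)$ stated in terms of $+$ and $\ast$, apply $\Phi$, use that it is an additive bijection preserving $\ast$, and deduce the identity from the corresponding ring axiom on the codomain. The multiplicative identity of $(\mathbb{C}^{n\times n}[I^2],+,\ast)$ is therefore $\Phi^{-1}\big(\int_I dt\, \I\, |t\rangle\langle t|\big)$, which, unwinding Eq.~(\ref{astproddef}), acts as $(\matr{e}\ast\matr{m})(t',t) = \matr{m}(t',t)$ only when $\matr{e}$ is a Dirac-delta-valued distribution $\matr{e}(t',t)=\delta(t'-t)\I$; this is the one place the argument is not purely formal and is the main obstacle I expect.

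Finally, to handle this obstacle cleanly, I would either (i) work in a slightly enlarged space that admits $\delta(t'-t)\I$ as a legitimate element, or (ii) verify the homomorphism property of $\Phi$ directly from Eq.~(\ref{astproddef}) and Eq.~(\ref{eq:TimeContinuousProduct}) by a direct computation (Fubini again) and drop the claim of a unit, calling the structures rngs; option (i) is preferable for compatibility with the subsequent sections that invoke $\I$ as the $\ast$-identity. Once this point is settled, gathering the four steps above yields the proposition.
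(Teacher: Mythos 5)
Your proposal is correct and follows essentially the same route as the paper: both arguments observe that the two structures are rings, that $\Phi$ is a bijective additive and multiplicative homomorphism by the very construction of $\ast$ in Eq.~(\ref{line2}), and that the $\ast$-identity is $1_\ast=\delta(t'-t)\I_n$, which $\Phi$ sends to the time-continuous identity. Your extra care in noting that $\delta(t'-t)\I_n$ is a distribution rather than an element of $\mathbb{C}^{n\times n}[I^2]$ as literally defined (finite norm at each $(t',t)$) is a legitimate refinement that the paper silently elides by admitting it anyway, exactly as in your option (i).
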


\begin{proof}
Clearly both $(\mathbb{C}^{n\times n}[I^2],+,\ast)$ and $(\mathbb{C}^{n\times n}\times I^2,+,.)$ are rings. We have already proven that $\Phi$ is a linear bijective homomorphism and thus we only need to verify that it maps the time-dependent identity to the time-continuous identity. This is straightforward on noting that the time-dependent identity associated with the $\ast$-product is $1_\ast:=\delta(t'-t)\I_n$, with $\delta(t'-t)$ the Dirac delta function. Then, since $\Phi$ is an homomorphism, $\Phi(1_\ast)$ is the time-continuous identity.
\end{proof}

In the following section we exploit the isomorphism $\Phi$ to translate questions involving time-dependent matrices into questions involving time-continuous ones. This leads to a surprisingly simple formulation for the solution of systems of linear differential equations with variable coefficients.


\section{The time-ordered exponential is a time-continuous inverse}\label{TimeContinuousInverse}
We define four time-continuous matrices related to the system of differential equations Eq.~(\ref{DiffEq}): 
\begin{definition}\label{DefTimeCont}
Let $\I_n\in\mathbb{C}^{n\times n}$, $n\geq 0$, be the $n\times n$ identity matrix.
Let $\matr{H}(t)\in\mathbb{C}^{n\times n}[I^2]$ depend on \emph{one} time variable.
We define
\begin{align*}
&\I:=\Phi\Big(\delta(t'-t)\,\I_n\Big),\!\hspace{23.65mm}\text{Time-continuous identity},\\
&\C:=\Phi\Big(\Theta(t'-t)\,\I_n\Big),\hspace{20.7mm}\text{Time-continous causality matrix},\\
&\A:=\Phi\Big(\Theta(t'-t)\,\H(t)\Big),\hspace{17mm}\text{Time-continous motion generator},\\
&\T:=\Phi\Big(\Theta(t'-t)\, \OE[\H](t',t)\Big),\hspace{6mm}\text{Time-continous propagator,}
\end{align*}
In these expressions, $\delta$ is the Dirac delta function and $\Theta(x)$ is the Heaviside step function, with the convention that $\Theta(0)=1$, that is
\begin{equation}\label{Heaviside}
\Theta(x):=\begin{cases}1&\text{if }x\geq0,\\
0&\text{otherwise.}\end{cases}
\end{equation}
\end{definition}
The relation between the time-continuous motion generator and propagator is the main result of this section:

\begin{theorem}\label{MainRes}
Let $I\subset\mathbb{R}$ be a finite interval and $\H(t)\in \mathbb{C}^{n\times n}[I^2]$. Then $\I-\A\in\mathbb{C}^{n\times n}\times I^2$ is invertible and the system of linear differential equations $\H(t) v(t)=\dot{v}(t)$ has time-continuous propagator
\begin{equation}
\T=(\I-\A)^{-1}\C,\label{TCProp}
\end{equation}
or equivalently
\begin{subequations}\label{OEThm}
\begin{align}
\Theta(t'-t)\OE[\H](t',t)&=\Phi^{-1}\big((\I-\A)^{-1}\C\big)(t',t),\label{OEThm1}\\
&=\int_{t}^{t'} \Phi^{-1}\big((\I-\A)^{-1}\big)(t',\tau)\,\,d\tau,\label{OEThm2}
\end{align}
\end{subequations}
for any $(t',t)\in I^2$.\\
%
\end{theorem}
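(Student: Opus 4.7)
The plan is to exploit the ring isomorphism $\Phi$ of the previous proposition: the target identity $(\I-\A)\T=\C$ lives in the time-continuous ring, but via $\Phi^{-1}$ becomes a $\ast$-product identity among ordinary time-dependent matrices whose components are directly accessible from the defining ODE. Once this product identity is verified, invertibility of $\I-\A$ is supplied by a Neumann series, whose terms are again controlled via $\Phi^{-1}$.

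First I would translate $(\I-\A)\T=\C$ under $\Phi^{-1}$. Using that the $\ast$-identity is $1_\ast=\delta(t'-t)\I_n$ and applying Eq.~(\ref{astproddef}) to $(\Theta(t'-t)\H(t))\ast(\Theta(t'-t)\OE[\H](t',t))$, the relation to be checked becomes, for $t'\geq t$,
\[
\OE[\H](t',t)\;-\;\int_t^{t'}\H(\tau)\,\OE[\H](\tau,t)\,d\tau\;=\;\I_n,
\]
and is trivial for $t'<t$ since $\A$, $\T$, $\C$ all have support in $\{t'\geq t\}$. The defining ODE Eq.~(\ref{DiffEq}) together with $\OE[\H](t,t)=\I_n$ gives $\int_t^{t'}\H(\tau)\,\OE[\H](\tau,t)\,d\tau=\OE[\H](t',t)-\I_n$ exactly, so the equality holds and $(\I-\A)\T=\C$ is established.

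Next I would prove invertibility of $\I-\A$ by constructing its inverse as a Neumann series. Writing $\matr{a}:=\Phi^{-1}(\A)=\Theta(t'-t)\H(t)$, a straightforward induction on Eq.~(\ref{astproddef}) yields, for $t'\geq t$,
\[
\matr{a}^{\ast n}(t',t)\;=\;\int_{t\leq\tau_1\leq\cdots\leq\tau_{n-1}\leq t'}\!\!\H(\tau_{n-1})\cdots\H(\tau_1)\,d\tau_1\cdots d\tau_{n-1}\;\H(t),
\]
and setting $M:=\sup_{\tau\in I}\|\H(\tau)\|<\infty$ produces the Dyson-type bound $\|\matr{a}^{\ast n}(t',t)\|\leq M^n|t'-t|^{n-1}/(n-1)!$, uniform on $I^2$. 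Hence $\matr{y}:=\sum_{n\geq 1}\matr{a}^{\ast n}$ converges absolutely and uniformly to a regular time-dependent matrix with $\matr{a}^{\ast N}\to 0$ uniformly, and a telescoping calculation gives $(\I-\A)(\I+\Phi(\matr{y}))=(\I+\Phi(\matr{y}))(\I-\A)=\I$. Left-multiplying $(\I-\A)\T=\C$ by this inverse yields Eq.~(\ref{TCProp}); applying $\Phi^{-1}$ gives Eq.~(\ref{OEThm1}); and Eq.~(\ref{OEThm2}) follows by one further use of Eq.~(\ref{astproddef}), since $\Phi^{-1}(\C)(\tau,t)=\Theta(\tau-t)\I_n$ reduces $\Phi^{-1}\!\big((\I-\A)^{-1}\C\big)(t',t)$ to $\int_t^{t'}\Phi^{-1}\!\big((\I-\A)^{-1}\big)(t',\tau)\,d\tau$.

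The main obstacle I anticipate is the bookkeeping for the Neumann series in the ring $\mathbb{C}^{n\times n}\times I^2$: the identity $\I=\Phi(\delta(t'-t)\I_n)$ is itself distributional, while the higher powers $\A^n=\Phi(\matr{a}^{\ast n})$ are regular. The cleanest remedy is to isolate the $n=0$ summand and manipulate only the norm-convergent regular tail $\matr{y}$, after which the uniform Dyson bound makes the telescoping identity $(\I-\A)\sum_{n\geq 0}\A^n=\I-\lim_{N\to\infty}\A^{N+1}=\I$ fully rigorous.
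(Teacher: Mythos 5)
Your proposal is correct, and its overall skeleton (Neumann series for invertibility, then the propagator identity, then unwinding the factor $\C$) matches the paper's. The invertibility step is essentially identical in substance: your bound $\|\matr{m}^{\ast n}(t',t)\|\leq M^n|t'-t|^{n-1}/(n-1)!$ is exactly the factorial bound the paper extracts via its Lemma on $\Phi^{-1}(\C^m)$, just computed directly on the iterated simplex integral rather than routed through powers of the causality matrix. Where you genuinely diverge is the proof of $(\I-\A)\T=\C$: the paper acts with $\T$ and $\A\T$ on an arbitrary time-continuous solution vector $\v$ of $\H(t)v(t)=\dot v(t)$, shows $(\I-\A)\T\v=\C\v$, and then concludes the operator identity from the fact that this holds for all solutions (implicitly using that solutions exist for every initial condition, so that the columns of $\OE[\H]$ span); you instead pull the identity back through $\Phi^{-1}$ and verify the scalar-in-time matrix identity
\begin{equation*}
\Theta(t'-t)\OE[\H](t',t)-\int_t^{t'}\H(\tau)\,\OE[\H](\tau,t)\,d\tau=\Theta(t'-t)\,\I_n
\end{equation*}
directly from the integral form of the defining ODE. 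Your route is slightly more economical — it avoids the ``holds for all solution vectors, hence as operators'' step and works with the matrix-valued solution $\OE[\H]$ itself — while the paper's version makes the connection to the underlying first-order system more explicit. Your handling of the distributional $n=0$ term by isolating it from the norm-convergent tail, and your final reduction of $(\I-\A)^{-1}\C$ to the $\tau$-integral in Eq.~(\ref{OEThm2}), both agree with the paper.
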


\begin{proof}[Proof of Theorem \ref{MainRes}]
We prove the invertibility of $\I-\A$. The following Lemma will be necessary:

\begin{lemma}\label{Mpowerm}
Let $\C$ be the time-continuous causality matrix on $I$. Then
\begin{equation}
\Phi^{-1}\big(\C^{m}\big)=\Theta(t'-t)\frac{(t'-t)^{m-1}}{(m-1)!},
\end{equation}
for $t'\geq t$ on $I$.
\end{lemma}
\begin{proof}
The lemma
follows from an induction. The result holds trivially  for $m=1$. Supposing that it holds up to $m$ and using the fact that $\Phi$ is a ring homomorphism, we have
\begin{align}
\Phi^{-1}\big(\C^{m+1}\big)&=\Phi^{-1}\big(\C\big)\ast\Phi^{-1}\big(\C^{m+1}\big),\\
&=\int_I\!d\tau \,\,\Theta(t'-\tau)\Theta(\tau-t)\frac{(\tau-t)^{m-1}}{(m-1)!},\nonumber\\
&=\Theta(t'-t)\frac{(t'-t)^{m}}{m!},
\end{align}
which establishes the
result.
\end{proof}
\noindent Consider the series
\begin{equation}
S_K[\A](t',t):=e_{t'}^{\dagger}\left(\sum_{k=0}^K \A^{k}\right)e_{t},
\end{equation}
with $t,t'\in I$ and $K\geq 0$ finite. To prove the invertibility of $\I-\A$ on $I$ we obtain a finite upper bound for $S_K[\A](t',t)$ for all $K\geq0$.
First, if $t'<t$, $\Theta(t'-t)=0$ and thus $S_K[\A](t',t)=0$ is trivially bounded for all $K\geq0$. Otherwise, for $t'\geq t$,
\begin{align}
\|S_K[\A](t',t)\|&\leq \sum_{k=0}^K \|e_{t'}^\dagger \A^ke_t\|,\nonumber\\
&\leq\sum_{k=0}^K\big\| \|\H\|^k_{\infty,\,I} \,\,e_{t'}^\dagger\C^ke_t\big\|.
\end{align}
Using Lemma \ref{Mpowerm} for the norm of $\|e_{t'}^\dagger \C^k e_t\|$ yields
\begin{align}
\|S_K[\A](t',t)\|&\leq \sum_{k=0}^K \|\H\|^k_{\infty,\,I} \,\frac{(t'-t)^{k-1}}{(k-1)!},\nonumber\\
&=\|\H\|_{\infty,\,I}\, e^{\|\H\|_{\infty,\,I} (t'-t)} \,\frac{\Gamma\big(K,\|\H\|_{\infty,\,I} (t'-t)\big)}{(K-1)!},\nonumber\\
&\leq \|\H\|_{\infty,\,I}\, e^{\|\H\|_{\infty,\,I} (t'-t)},\label{Sklimit}
\end{align}
where $\Gamma(a,b)=\int_{b}^\infty t^{a-1}e^{-t}dt$ is the upper incomplete gamma function and $\|\H\|_{\infty,\,I}=\sup_{t\in I}\|\H(t)\|$ is finite since $\H(t)\in\mathbb{C}^{n\times n}[I^2]$. 
Furthermore, by Lemma \ref{Mpowerm} we have
\begin{equation}
\lim_{K\to\infty} \|S_K[\A](t',t)-S_{K-1}[\A](t',t)\|=\lim_{K\to\infty}\|e_{t'}^\dagger \A^K e_t\|=0.
\end{equation}
Therefore, the limit $S_\infty[\A](t',t):=\lim_{K\to\infty}S_K[\A](t',t)$ exists and, by Eq.~(\ref{Sklimit}) and Fatou's lemma, is bounded by
\begin{equation}
S_\infty[\A](t',t)\leq \|\H\|_{\infty,\,I} \,e^{\|\H\|_{\infty,\,I}\, (t'-t)}.
\end{equation}
We have shown that on any finite time interval $[t,t']\subseteq I$, the series $\sum_{k=0}^\infty \A^k$ is absolutely convergent. Hence $\I-\A$ is invertible and $e_{t'}^{\dagger}(\I-\A)^{-1}e_t=S_\infty[\A](t',t)$. Remarkably, as long as $\|\H\|_{\infty,\,I}$ is finite, this holds regardless of the details of $\H(t)$. In particular $e_{t'}^{\dagger}(\I-\A)^{-1}e_t$ exists even when $\I_n-\H(t)$ is not invertible for some or all times $t$ in $I$.
Ultimately this is because the $\ast$-inverse of a time-dependent matrix always exists: for example the $\ast$-inverse of the matrix which is identically zero at all times is the $\ast$-identity denoted $1_\ast:=\delta(t'-t) \matr{I}_n$. This is verified from the Volterra equation satisfied by $(1_\ast-1_\ast)^{\ast-1}-1_\ast$ (see Eq.~(\ref{VolterraInverse}) of Appendix \ref{AstInverse}.)
These results are perhaps less surprising given Lemma \ref{Mpowerm}, which shows that, through powers of $\C$, the inverse of $\I-\A$ gives rise to exponential functions of $\H$. Such functions exist regardless of the invertibility of $\H$.

We turn to proving the formula for the time-continuous propagator $\T$.
Let $v(t)\in\mathbb{C}^{n\times 1}[I]$ be a time-dependent vector such that for all $t\in I$, $v(t)$ satisfies $\H(t)v(t)=\dot{v}(t)$. Let  $\v:=\Phi\big(v(t)\big)$ be the corresponding time-continuous vector.
Then the result for $\T$ follows from a direct calculation of $\T\v-\A\T\v$. We have
\begin{align}
\T \v&=\int_I dt'\! \int_I d t\,\, \Theta(t'-t)\, \OE[\H](t',t)\,v(t) \otimes e_{t'}\nonumber\\
&=\int_I dt'\! \int_I d t\,\, \Theta(t'-t)\,v(t') \otimes e_{t'}
\end{align}
and further
\begin{align}
\A\T &=\int_I dt'\! \int_I d\tau\!\int_I  dt\,\, \Theta(t'-\tau)\Theta(\tau-t) \H(\tau)\OE[\H](\tau,t)\otimes e_{t'}e_t^{\dagger},\nonumber\\
\shortintertext{whence}
\A\T\v &=\int_I dt'\! \int_I d\tau\!\int_I  dt\,\,\Theta(t'-\tau)\Theta(\tau-t) \H(\tau)\OE[\H](\tau,t)v(t)\otimes e_{t'}\nonumber\\
&=\int_I dt'\! \int_I d\tau\!\int_I  dt\,\, \Theta(t'-\tau)\Theta(\tau-t) \H(\tau)v(\tau)\otimes e_{t'}\nonumber\\
&=\int_I dt'\! \int_I d\tau\!\int_I  dt\,\, \Theta(t'-\tau)\Theta(\tau-t) \dot{v}(\tau)\otimes e_{t'}\nonumber\\
&=\int_I dt'\! \int_I dt\, \,\Theta(t'-t) \int_{t}^{t'}\! d\tau\,\,\dot{v}(\tau)\otimes e_{t'}\nonumber\\
&=\int_I dt'\! \int_I dt\,\, \Theta(t'-t) \big( v(t')-v(t)\big)\otimes e_{t'}.
\end{align}
where the second equality has been obtained by substituting $\OE[\H](\tau,t)v(t)=v(\tau)$, since $v(t)$ is a solution for all $t\in I$. It follows that $\T\v-\A\T\v=\int_I \!dt' \int_I\! dt\,\, \Theta(t'-t) v(t)\otimes e_{t'}$.
Finally, we have
\begin{align}
\C\v=\int_I \!dt' \int_I\! dt\,\, \Theta(t'-t) v(t)\otimes e_{t'}=\T\v-\A\T\v.
\end{align}
Since this holds for all time-continuous vectors $\v$ that are solutions to the matrix differential equation, and since $\I-\A$ is invertible, we have $\T=(\I-\A)^{-1}\C$, which proves Eq.~(\ref{TCProp}).

We prove Eqs.~(\ref{OEThm1}, \ref{OEThm2}). Since $\Theta(t'-t)\OE[\H](t',t)=\Phi^{-1}(\T)$ by the definition of $\T$, Eq.~(\ref{OEThm1}) follows immediately from Eq.~(\ref{TCProp}). For Eq.~(\ref{OEThm2}), define $\matr{G}(t',t)$ by $\Theta(t'-t)\matr{G}(t',t):=\Phi^{-1}\big((\I-\A)^{-1}\big)$. Then
\begin{align}
\Phi^{-1}(\T) &= \Phi^{-1}\big((\I-\A)^{-1}\C\big),\nonumber\\
&=\Phi^{-1}\left(\int_{I}dt'\int_I d\tau' \Theta(t'-\tau')\matr{G}(t',\tau') \otimes e_{t'}e_{\tau'}^\dagger\,\int_{I}d\tau\int_Idt\,\Theta(\tau-t)\I_n\otimes e_{\tau}e_{t}^\dagger\right),\nonumber\\
&=\Phi^{-1}\left(\int_{I}dt'\int_I dt\, \Theta(t'-t) \left(\int_{t}^{t'}d\tau\,\matr{G}(t',\tau)\right) \otimes e_{t'}e_{t}^\dagger\right),\nonumber\\
&=\Theta(t'-t)\,\int_{t}^{t'}d\tau\,\matr{G}(t',\tau).
\end{align}
This proves the Theorem.
\end{proof}

\begin{remark}[Ordered exponential of time-independent matrix]
In the situation where $\H(t)\equiv \H_0$ is time independent, it is well known that  the time-ordered exponential is simply a matrix exponential; that is, for $t\leq t'$ we have
\begin{equation}
\OE[\H](t',t)=e^{\H_0 (t'-t)}.
\end{equation}
This is recovered from Theorem \ref{MainRes} on noting that in this situation $\A=\Phi\big(\Theta(t'-t)\H_0\big)=\C\Phi(\H_0)$. Hence, using Proposition \ref{Mpowerm}, we have
\begin{align}
\OE[\H_0](t',t)&=\Phi^{-1}\left((\I-\A)^{-1}\C \right),\nonumber\\
&=\Phi^{-1}\left(\sum_{m=0}^{\infty} \Phi(\H_0)^m \,\C^{m+1}\right),\\
\shortintertext{Since $\Phi$ is a homomorphism, $\Phi(\H_0)^m=\Phi(\H_0^m)$, and Lemma \ref{Mpowerm} leads to}
&=\sum_{m=0}^{\infty}\H_0^m\,\Theta(t'-t)\frac{(t'-t)^m}{m!}\nonumber\\
&=\Theta(t'-t)\,e^{\H_0(t'-t)}.
\end{align}
A consequence of this result is that the matrix exponential of any $n\times n$ complex matrix is a submatrix of the inverse of a time-continuous matrix.
\end{remark}

\section{Path-sum formulation of the time-ordered exponential function}\label{PathSum}

\subsection{Context}
The \textit{``most basic result of algebraic graph theory"} (as 
described in \cite{Flajolet2009}) states that the powers of the 
adjacency matrix $\A_\G$ of a graph $\G$ generate all the walks\footnote{Also called paths, i.e.~trajectories on the graph.} 
on this graph \cite{Biggs1993}. This result extends to any matrix $
\A$, by constructing a graph $\G=(\mathcal{V},\mathcal{E})$, with 
vertex set $\mathcal{V}$ and edge set $\mathcal{E}$, and such 
that $\A_{ij}\neq0\,\iff\,(j,i)\in\mathcal{E}$. Then, the weight of the 
edge from vertex $j$ to vertex $i$ on $\G$ is $\A_{ij}$ and $(\A^
\ell)_{ij}$ is the sum of the weights of all walks of length $\ell$ from 
$j$ to $i$ \cite{Flajolet2009}. The weight of a walk is simply the 
ordered product of the weight of the edges it traverses. Note how 
the indices of a matrix, e.g.~$\A_{ij}$, are written right-to-left and 
correspond to an edge, $(j,i)$, written left-to-right. This is because 
of unfortunate conventions. 

Now consider $\A$, the time-continuous motion generator of Definition \ref{DefTimeCont}. Since we have $(\I-\A)^{-1}=\sum_{\ell\geq0} \A^\ell$, it follows that $(\I-\A)^{-1}_{ij}$ can be interpreted as the sum of the weights of all the walks from $j$ to $i$ on a graph $\G$ associated with $\A$. Thus, the time-continuous propagator $\T$ and the time-ordered exponential of $\H$, both have an interpretation as a sum of weighted walks since, by Theorem \ref{MainRes}, $\T=(\I-\A)^{-1}\C$ and $\Theta(t'-t)\OE[\H](t',t)=\Phi^{-1}(\T)$. The time-ordered exponential is therefore susceptible to a particular resummation technique based on the structure of walk sets, called the method of path-sums.

In its most general form, the method of path-sums stems from a fundamental algebraic property of the ensemble of all walks on any weighted graph: any walk factorizes uniquely into products of prime walks, the \emph{simple paths} and \emph{simple cycles} of $\G$. A simple path, also known as self-avoiding walk, is a walk whose vertices are all distinct. A simple cycle, also known as self-avoiding polygon, is a walk whose endpoints are identical and intermediate vertices (from the second to the penultimate) are all distinct and different from the endpoints. The path-sum formulation for the sum of all walks weights on $\G$ is the representation of this sum that only involves the primes. 
Since there are only finitely many primes on any finite graph, the path-sum involves finitely many terms.
For a full exposition of the algebraic structure of walk ensembles at the origin of the path-sum representation, we refer the reader to \cite{Giscard2012}. 
In \S\ref{PSOE} we give the path-sum formulation for the time-ordered exponential $\OE[\H](t',t)$. We shall see that it takes the form of a branched continued fraction of finite depth and breadth.

\subsection{Path-sum formulation of the time-ordered exponential}\label{PSOE}
We consider $\H(t)\in\mathbb{C}^{n\times n}[I^2]$ a time-dependent matrix. Let $\G=(\mathcal{V},\mathcal{E})$ be its associated graph: if there exists $t\in I$ with $\H_{\beta\alpha}(t)\neq0$, then $(\alpha,\beta)\in \mathcal{E}$. Let $\G\backslash\{\alpha_1,\cdots, \alpha_n\}$ denote the subgraph of $\G$ with the set of vertices $\{\alpha_1,\cdots, \alpha_n\}\subset \mathcal{V}$ and the edges incident to them removed from $\G$. Let $\Pi_{\G;\,\alpha\omega}$ and $\Gamma_{\G;\,\alpha}$ be the set of simple paths from $\alpha$ to $\omega$ on $\G$ and the set of simple cycles from $\alpha$ to itself on $\G$, respectively. If $\G$ has finitely many vertices and edges, these two sets are finite.

\vspace{1mm}\begin{theorem}\label{PSresult}
The time-ordered exponential of $\H(t)$ is given by
\begin{subequations}
\begin{align}
&\hspace{-3mm}\OE_{\omega\alpha}[\H](t',t)=\sum_{p\in\Pi_{\G;\alpha\omega}}\!\!\int_{t}^{t'}\!\!\Big(\mG_{\G\backslash\{\alpha,\nu_1,\cdots, \nu_{\ell-1}\};\,\omega}\ast\H_{\omega\nu_{\ell-1}}\ast\cdots\label{OEaw}\\
&\hspace{38mm}\cdots  \ast\H_{\nu_2\nu_1}\ast\mG_{\G\backslash\{\alpha\};\,\nu_1}\ast\H_{\nu_1\alpha}\ast \mG_{\G;\,\alpha}\Big)\!(t',\tau)\,d\tau,\nonumber
\end{align}
where $p=(\alpha,\,\nu_1,\,\dotsc,\, \nu_{\ell-1},\,\omega)$ is a simple path of length $\ell$ from $\alpha$ to $\omega$ on $\G$ and $\matr{G}_{\G;\,\alpha}(t',t)$, called a Green's kernel, is given by
\begin{equation}
\mG_{\G;\,\alpha}=\bigg[1_\ast-\sum_{\gamma\in \Gamma_{\G;\alpha}}\!\!\mG_{\G\backslash\{\alpha,\mu_1,\cdots, \mu_{\ell'-2}\};\,\mu_{\ell'-1}}\ast\cdots\ast  \mG_{\G\backslash\{\alpha\};\,\mu_1}\ast\matr{H}_{\mu_1\alpha}\bigg]^{\ast-1},\label{GreenCycle}
\end{equation}
with $1_\ast:=\delta(t'-t)\matr{I}_n$ the time-dependent identity, $\gamma=(\alpha,\,\mu_1,\,\dotsc,\, \mu_{\ell'-1},\,\alpha)$ is a simple cycle of length $\ell'$ from $\alpha$ to itself on $\G$ and the time dependencies have been omitted for the sake of clarity.
\end{subequations}
\end{theorem}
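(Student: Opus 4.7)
The plan is to combine Theorem \ref{MainRes} with the Neumann expansion $(\I-\A)^{-1}=\sum_{k\geq 0}\A^k$ — absolutely convergent on any finite interval by the proof of Theorem \ref{MainRes} — and the unique prime factorization of walks recalled above the statement. Because $\Theta(t'-t)\OE_{\omega\alpha}[\H](t',t) = \Phi^{-1}\big((\I-\A)^{-1}\C\big)_{\omega\alpha}(t',t)$ and $\A_{\beta\alpha}=\Phi(\Theta(t'-t)\H_{\beta\alpha}(t))$ with $\Phi$ a ring homomorphism, $(\A^k)_{\omega\alpha}$ equals $\Phi$ applied to the sum over length-$k$ walks $w=(\alpha,\nu_1,\dotsc,\nu_{k-1},\omega)$ of the $\ast$-product $\H_{\omega\nu_{k-1}}\ast\cdots\ast\H_{\nu_1\alpha}$ of the edge weights along $w$. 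The right-multiplication by $\C$ supplies one last $\ast$-convolution with $\Theta(t'-t)\I_n$, which under $\Phi^{-1}$ becomes the outer $\int_t^{t'}d\tau$ visible in \eqref{OEaw}. Hence $\Theta(t'-t)\OE_{\omega\alpha}[\H](t',t)$ is a sum, indexed by all walks $w:\alpha\to\omega$ on $\G$, of the time-ordered $\ast$-product of edge weights along $w$, followed by that final $\ast$-convolution.

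The next step is the backbone-plus-hits factorization of \cite{Giscard2012}: for any walk $w:\alpha\to\omega$ define its backbone $(\alpha=\nu_0,\nu_1,\dotsc,\nu_\ell=\omega)$ by the last-exit rule — $\nu_i$ is the vertex through which $w$ leaves the subgraph $\G\setminus\{\nu_0,\dotsc,\nu_{i-1}\}$ for the last time. The backbone is a simple path $p\in\Pi_{\G;\alpha\omega}$, and $w$ decomposes uniquely as the alternating concatenation of (a hit at $\nu_i$, i.e.\ a walk from $\nu_i$ to itself on $\G\setminus\{\nu_0,\dotsc,\nu_{i-1}\}$) and (the backbone edge $\nu_{i-1}\to\nu_i$, with weight $\H_{\nu_i\nu_{i-1}}$). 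Transporting this bijection into the $\ast$-algebra — where the rightmost factor is earliest in time — and summing over all hits at each $\nu_i$ recovers, by the very definition of $\mG$, the factor $\mG_{\G\setminus\{\nu_0,\dotsc,\nu_{i-1}\};\nu_i}$ at vertex $\nu_i$. Assembling these factors in the order prescribed by $\ast$ reproduces exactly the integrand of \eqref{OEaw}.

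The Green's kernel identity \eqref{GreenCycle} is obtained by applying the same last-exit principle one level deeper to walks $\alpha\to\alpha$ on $\G$. Such a walk is either empty or splits uniquely into a first-return to $\alpha$ followed by an arbitrary walk $\alpha\to\alpha$, giving the recursion $\mG_{\G;\alpha}=1_\ast+F_{\G;\alpha}\ast\mG_{\G;\alpha}$. A first-return is itself in bijection with a simple cycle $\gamma\in\Gamma_{\G;\alpha}$ decorated with hits at the intermediate vertices $\mu_i$ on $\G\setminus\{\alpha,\mu_1,\dotsc,\mu_{i-1}\}$; the removal of $\alpha$ from every intermediate subgraph is precisely what enforces the first-return condition that $\alpha$ is not revisited. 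The bracket in \eqref{GreenCycle} is therefore $F_{\G;\alpha}$, and solving the recursion by Volterra inversion (available by the argument in the proof of Theorem \ref{MainRes} and Appendix \ref{AstInverse}) gives $\mG_{\G;\alpha}=(1_\ast-F_{\G;\alpha})^{\ast-1}$. Since each recursive call strictly reduces $|\mathcal{V}|$, the nested expression terminates as a branched continued fraction of depth at most $n$.

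The main obstacle is the combinatorial uniqueness of the backbone-plus-hits and first-return factorizations — specifically the claim that the particular nested sequence of subgraph removals $\G,\G\setminus\{\alpha\},\G\setminus\{\alpha,\nu_1\},\dotsc$ is exactly what makes every walk arise once and only once in the reorganization. This is the content of the walk prime factorization in \cite{Giscard2012}; once it is invoked, the remainder of the argument is a direct translation of graph-theoretic concatenation into the $\ast$-algebra through the ring isomorphism $\Phi$, together with some bookkeeping for the degenerate cases $\ell=0$ (trivial backbone when $\omega=\alpha$) and $\ell'=1$ (self-loops).
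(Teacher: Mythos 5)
Your proposal is correct and follows essentially the same route as the paper: expand $(\I-\A)^{-1}$ as a Neumann series, interpret $\Phi^{-1}(\A^k)_{\omega\alpha}$ as the $\ast$-weights of length-$k$ walks, and then invoke the walk prime-factorization / path-sum theorem of \cite{Giscard2012} to resum the walk series over simple paths and nested Green's kernels, with the final $\C$ factor supplying the outer integral. The only difference is cosmetic: you sketch the last-exit and first-return decompositions behind that cited theorem rather than quoting it verbatim, but like the paper you ultimately defer the uniqueness of the factorization to \cite{Giscard2012}.
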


\begin{remark} In the situation where $\alpha=\omega$, Eq.~(\ref{OEaw}) simplifies to
\begin{equation*}
\OE_{\G;\,\alpha\alpha}[\H](t',t)=\int_{t}^{t'}\!\mG_{\G;\,\alpha}(t',\tau)\,\, d\tau,
\end{equation*}
with $\mG_{\G;\,\alpha}$ the Green's kernel of Eq.~(\ref{GreenCycle}). It follows that this Green's kernel satisfies $\Theta(t'-t)\matr{G}_{\G;\,\alpha}(t',t)=\Phi^{-1}\big((\I-\A)^{-1}\big)_{\alpha\alpha}$.
\end{remark}

\begin{remark}
The Green's kernel $\mG_{\G;\,\alpha}$ is obtained recursively through Eq.~(\ref{GreenCycle}). Indeed, $\mG_{\G;\,\alpha}$ is expressed in terms of Green's kernels such as $\mG_{\G\backslash\{\alpha,\mu_1,\cdots, \mu_{\ell-2}\};\,\mu_{j-1}}$, which is in turn defined through Eq.~(\ref{GreenCycle}) but on the subgraph $\mathcal{G}\backslash{\{\alpha,\ldots,\mu_{j-1}}\}$ of $\mathcal{G}$. The recursion stops when vertex $\mu_j$ has no neighbour on this subgraph, in which case
\begin{align*}
  \mG_{\G\backslash\{\alpha,\mu_1,\cdots, \mu_{\ell-2}\};\,\mu_{j-1}} =  \begin{cases}
   \big[1_\ast-\H(t)_{\mu_j\mu_j}\big]^{\ast-1} & \text{if $\H(t)_{\mu_j\mu_j}\neq0$ for some $t\in I$,}\\
  1_\ast & \text{otherwise,}
  \end{cases}  
\end{align*}
The Green's kernel  $\mG_{\G;\,\alpha}$ is thus a branched continued fraction of $\ast$-inverses and which terminates at a finite depth. 
\end{remark}

\begin{proof}
We first show that any entry of the time-ordered exponential $\OE[\H](t',t)$ of $\H$ is a sum of weighted walks on the graph $\G$ associated with $\H$. We start with Eq.~(\ref{OEThm2}):
\begin{equation}
\Theta(t'-t)\OE[\H](t',t)_{\alpha \omega}=\int_{t}^{t'} \Phi^{-1}\big((\I-\A)^{-1}\big)_{\alpha \omega}(t',\tau)\,\,d\tau.\nonumber
\end{equation}
Since $\Phi$ is an homomorphism, $\Phi^{-1}\big(\A^{k}\big)(t',t)=(\matr{H}^{\ast k})(t',t)$ for $k\in\mathbb{N}$, and the Neumann series for $(\I - \A)^{-1}$ yields (we have shown in the proof of Theorem \ref{MainRes} that this series always exists)
\begin{align}\label{eq:fWalks2}
\Theta(t'-t)\OE[\H](t',t)_{\alpha\omega} &= \int_{t}^{t'}\!\!\!d\tau\,\sum^{\infty}_{k=0} (\matr{H}^{\ast k})_{\alpha\omega}(t',\tau),\nonumber\\
&=\int_{t}^{t'}\!\!d\tau\!\!\!\sum_{w\in W_{\G;\alpha\omega}} \!\!\!\!\ew(w)(t',\tau),
\end{align}
where the second equality follows by virtue of the correspondence between matrix powers and walks on the associated graph \cite{Flajolet2009, Giscard2013}. In particular, recall that the weight of a walk is the product of the weights of the edges it traverses, e.g. the walk $w=(\mu_0,\,\mu_1,\,\mu_2,\,\dotsc,\,\mu_n)$ has weight $\ew(w)=\H_{\mu_n\mu_{n-1}}\ast\cdots\ast\H_{\mu_2\mu_1}\ast\H_{\mu_1\mu_0}$.

Second, we use the result of \cite{Giscard2012}, which reduces a series of weighted walks, such as the one of Eq.~(\ref{eq:fWalks2}), into a sum of weighted simple paths and simple cycles. This result is reproduced here for the sake of completeness:
\begin{theorem}[Path-sum \cite{Giscard2012}]\label{PSformalresult}
Let $\G$ be a weighted graph with weight function $\ew(.)$. Assuming existence, 
the sum of the weights of all the walks from $\alpha$ to $\omega$ on $\G$, $
\sum_{w\in \W_{\mathcal{G};\alpha\omega}}\ew(w)$ admits the path-sum representation
\begin{subequations}\label{PathSumFormula}
\begin{align*}
 \sum_{w\in \W_{\mathcal{G};\alpha\omega}}\hspace{-2mm}\ew(w)=\sum_{\Pi_{\mathcal{G};\alpha\omega}}\ew(\omega)'_{\mathcal{G}\backslash\{\alpha,\ldots,\nu_\ell\}}\,\ew_{\omega\nu_\ell}\cdots \ew(\nu_2)'_{\mathcal{G}\backslash\{\alpha\}}\,\ew_{\nu_2\alpha}\, \ew(\alpha)'_\mathcal{G}\,,
\shortintertext{where $\ew_{\mu\nu}$ is the weight associated with the edge $\nu\to\mu$ and $\ew(\alpha)'_{\mathcal{G}}$ is the sum of the weights of all walks from $\alpha$ to itself on $\G$. This quantity is recursively obtained as}
  \ew(\alpha)'_\mathcal{G}:=\Bigg[\matr{I} -\sum_{\Gamma_{\mathcal{G};\alpha}}\ew_{\alpha\mu_m}\,\ew(\mu_m)'_{\mathcal{G}\backslash\{\alpha,\ldots,\mu_{m-1}\}}\cdots \ew_{\mu_3\mu_2} \,
 \ew(\mu_2)'_{\mathcal{G}\backslash\{\alpha\}} \,\ew_{\mu_2\alpha}  \Bigg]^{-1},
\end{align*}
\end{subequations}
with $\matr{I}$ the identity.
\end{theorem}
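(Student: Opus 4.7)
The plan is to prove Theorem \ref{PSformalresult} by establishing a unique combinatorial factorization of every walk on $\G$ into a simple-path ``skeleton'' decorated by excursions at each skeleton vertex, then recursively decomposing the excursions via simple cycles. This realizes the ``unique factorization of walks into primes'' alluded to at the start of this section.

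First I would construct the skeleton map. Given a walk $w = (v_0, \ldots, v_k)$ from $\alpha$ to $\omega$ on $\G$, set $\nu_0 := \alpha$ and $\sigma_0 := \max\{t : v_t = \alpha\}$, then inductively $\nu_{i+1} := v_{\sigma_i + 1}$ and $\sigma_{i+1} := \max\{t : v_t = \nu_{i+1}\}$, stopping once $\sigma_r = k$. Two short maximality arguments then show that (i) the $\nu_i$ are pairwise distinct and consecutively adjacent on $\G$, so $(\nu_0, \ldots, \nu_r) \in \Pi_{\G;\alpha\omega}$, and (ii) on each interval $[\sigma_{i-1}+1, \sigma_i]$ the walk is an excursion from $\nu_i$ back to $\nu_i$ entirely within $\G \setminus \{\nu_0, \ldots, \nu_{i-1}\}$. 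This factors $w$ uniquely as
\begin{equation*}
w = w_0 \cdot e_1 \cdot w_1 \cdot e_2 \cdots e_r \cdot w_r,
\end{equation*}
with $e_i$ the directed edge $\nu_{i-1} \to \nu_i$ and $w_i$ an excursion at $\nu_i$ on $\G \setminus \{\nu_0, \ldots, \nu_{i-1}\}$. Since any such data reconstructs a unique walk, the skeleton map is a bijection.

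Next, summing weights along this bijection and invoking the multiplicativity of $\ew$ under concatenation --- with the right-to-left edge convention of the paper --- would directly yield the first identity of Theorem \ref{PSformalresult}, with $\ew(\nu_i)'_{\G\setminus\{\nu_0,\ldots,\nu_{i-1}\}}$ defined as the sum of weights of walks from $\nu_i$ to itself on the reduced subgraph. For the recursive formula, I would decompose every $\alpha$-to-$\alpha$ walk on $\G$ into an ordered sequence of \emph{first-return} cycles (walks from $\alpha$ to $\alpha$ whose interior avoids $\alpha$). Applying the skeleton construction to the interior of such a first-return cycle produces a simple cycle $\gamma \in \Gamma_{\G;\alpha}$ dressed by excursions on the nested subgraphs $\G\setminus\{\alpha\}, \G\setminus\{\alpha,\mu_2\}, \ldots$. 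Letting $F$ denote the resulting sum of dressed simple-cycle weights, the bijection between $\alpha$-to-$\alpha$ walks and ordered tuples of first-return cycles forces $\ew(\alpha)'_\G = \sum_{n\geq 0} F^n = (\matr{I} - F)^{-1}$, which is the second identity. The recursion terminates at finite depth because each recursive call operates on a strictly smaller subgraph of the finite graph $\G$.

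The principal obstacle is lifting the combinatorial bijection to a genuine algebraic identity when $\ew$ takes values in a noncommutative ring: the order in which weights are concatenated must match the chosen convention exactly, and each $(\matr{I} - F)^{-1}$ must exist within the ring. The ``assuming existence'' hypothesis of Theorem \ref{PSformalresult} absorbs precisely this technicality. For the intended use within Theorem \ref{PSresult}, existence of all required $\ast$-inverses follows from the invertibility of $\I - \A$ established in Theorem \ref{MainRes}, combined with the general existence of $\ast$-inverses of time-dependent matrices (Appendix \ref{AstInverse}); since $\G$ is finite, only finitely many inversions arise during the recursion and each is well-defined.
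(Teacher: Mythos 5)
Your proposal is correct, but note that the paper itself offers no proof of Theorem~\ref{PSformalresult}: it is imported verbatim from \cite{Giscard2012} and restated ``for the sake of completeness,'' so there is no in-paper argument to compare against. What you have reconstructed is, in substance, the standard proof of the cited result: a last-exit (skeleton) decomposition showing that every walk from $\alpha$ to $\omega$ factors uniquely as a simple path dressed by closed excursions on the nested subgraphs $\G, \G\backslash\{\alpha\}, \G\backslash\{\alpha,\nu_2\},\dots$, followed by a first-return decomposition of closed walks whose geometric resummation $\sum_{n\ge 0}F^{n}=(\matr{I}-F)^{-1}$ produces the recursive formula for $\ew(\alpha)'_{\G}$, with the interiors of first-return loops re-skeletonized into dressed simple cycles. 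Your bijection arguments (distinctness of the $\nu_i$ via maximality of the $\sigma_i$, avoidance of earlier skeleton vertices by later excursions) are sound, and you correctly keep the noncommutative ordering consistent with the paper's right-to-left index convention, so that concatenation in traversal order corresponds to left-multiplication of weights. You also correctly identify that the only non-combinatorial content --- rearranging the infinite walk sum and inverting each $\matr{I}-F$ --- is exactly what the ``assuming existence'' hypothesis absorbs; for the application in Theorem~\ref{PSresult} this is supplied by the absolute convergence of the Neumann series established in Theorem~\ref{MainRes} and by the unconditional existence of $(1_\ast-\matr{m})^{\ast-1}$ shown in Appendix~\ref{AstInverse}. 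In short: the paper delegates this proof to a reference; you have supplied a correct, self-contained version of it.
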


%
Theorem \ref{PSresult} follows from Theorem \ref{PSformalresult} on noting that an edge $\alpha\to\beta$ of the graph $\G$ corresponding to $\H$ has weight $\ew(\alpha\beta)=\H_{\beta\alpha}$ and that these weights multiply with the $\ast$-product. In particular, since $\ew(\alpha)'_\mathcal{G}$ is the sum of the weights of all walks from $\alpha$ to itself on $\G$, the correspondence between walks and matrix powers implies $\ew(\alpha)'_\mathcal{G}=\Phi^{-1}\big((\I-\A)^{-1}\big)_{\alpha\alpha}=\Theta(t'-t) \matr{G}_{\G;\,\alpha}(t',t)$.
\end{proof}

\section{Examples}
In this section we present three synthetic examples demonstrating the path-sum formulation of the time-ordered exponential, Theorem \ref{PSresult}.

\begin{figure}[t!]
\vspace{-25mm}
\begin{center}
\includegraphics[width=.3\textwidth]{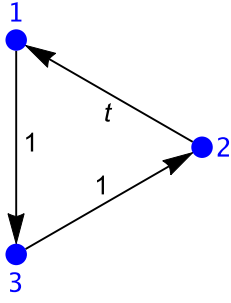}
\end{center}
\vspace{-5mm}
\caption{The oriented triangle $\mathcal{T}$, the graph corresponding to the time dependent matrix $\H(t)$ of Example \ref{TriangleExample}. The edge-weights are indicated in black, next to the edges.}\label{fig:T3}
\end{figure} 
\begin{example}\label{TriangleExample}
In this example, we consider the following system of differential equations on the interval $I=[0,T]$, $T>0$, for $t\in I$
\begin{equation*}
\H(t)v(t)=\dot{v}(t),\qquad
\H(t)=\begin{pmatrix}
0&t&0\\
0&0&1\\
1&0&0
\end{pmatrix}.
\end{equation*}
In spite of its apparent simplicity this system is not analytically solvable by the native DSolve function of \textit{Mathematica}. In particular, $\H(t)$ does not commute with itself at different times: $[\H(t'),\H(t)]=\H(t')\H(t)-\H(t)\H(t')\neq0$ for $t'\neq t$. The graph corresponding to $\H(t)$ is an \emph{oriented} triangle, which we denote $\mathcal{T}$, see Fig.~(\ref{fig:T3})

Let us first show in detail how the entry in the first row and the first column of the time-ordered exponential can be calculated. There is a single simple cycle from vertex 1 to itself on $\mathcal{T}$, namely $1\to3\to2\to1$. Furthermore $\mathcal{T}$ is acyclic as soon as a vertex is removed from it. Thus, following  Theorem \ref{PSresult}, the path-sum formulation for $\OE[\H](t,0)_{11}$ is:
\begin{align*}
&\OE[\H](t,0)_{11}=\int_{0}^t \mG_{\mathcal{T};11}(\tau) \,d\tau,\\[1ex]
& \mG_{\mathcal{T};11}(\tau)=\Big(1_\ast-\,\underbrace{\!\!\!\vphantom{\big[}
\overbrace{t}^{\text{Edge }1\ot2}\ast\overbrace{1}^{\text{Edge }2\ot 3}\ast \overbrace{1}^{\text{Edge }3\ot 1}}_{\text{Triangle}\CycleTri}\Big)^{\ast-1}(\tau,0).
\end{align*}
To find the required $\ast$-inverse, let $f(t',t):=\big(t\ast1\ast 1\big)(t',t)=(t'-t)^2 t'/2$. Then the $\ast$-powers of $f$ are given by
\begin{align}
f^{\ast n}(t,0)=\frac{4^{n+1}  \Gamma \left(n+\frac{7}{4}\right)}{\Gamma \left(\frac{3}{4}\right) \Gamma (4 n+4)}t^{3+4 n}\nonumber
\end{align}
for $ n > 0$ and $f^{\ast 0}(t,0)=\delta(t)$. The Neumann series $\sum_{n\geq 0}f^{\ast n}$ then gives
\begin{equation}
\big(1_\ast-t\ast1\ast1\big)^{\ast-1}(t,0)=\delta(t)+\frac{1}{2} t^3 \!\, _0F_2\left(;\frac{5}{4},\frac{3}{2};\frac{t^4}{64}\right)=\mG_{\mathcal{T};11}(t)\nonumber
\end{equation}
from which we find
\begin{equation*}
\OE[\H](t,0)_{11}=\int_{0}^t \mG_{\mathcal{T};11}(\tau)\,d\tau=\!\, _0F_2\left(;\frac{1}{4},\frac{1}{2};\frac{t^4}{64}\right).
\end{equation*}
Similarly
\begin{subequations}
\begin{align*}
\OE[\H](t,0)_{22}=\, _0F_2\left(;\frac{1}{2},\frac{3}{4};\frac{t^4}{64}\right),\qquad\OE[\H](t,0)_{33}=\, _0F_2\left(;\frac{1}{4},\frac{3}{4};\frac{t^4}{64}\right).
\end{align*}
\end{subequations}
Note how these entries differ from $\OE[\H](t,0)_{11}$, despite arising from circular permutations of the same triangle (cycles 1321, 2132 and 3213). This is because $t$ and $1$ do not $\ast$-commute, that is $t\ast1\neq 1\ast t$.

Since there is a single simple path between any two vertices of $\mathcal{T}$, path-sum formulation for the the off-diagonal elements is, for example,
\begin{subequations}
\begin{align*}
\OE[\H](t,0)_{12}&=\int_{0}^td\tau \,\Big(\overbrace{t}^{\text{Edge }2\ot1}\!\!\ast~~\mG_{\mathcal{T};22}\Big)(\tau),&\text{Path}\PathTwotoOne\\[-2ex]
&=\int_0^t\!d\tau \int_{0}^{\tau}\!d\tau'\,\tau\,\mG_{\mathcal{T};22}(\tau'),\\[2ex]
\OE[\H](t,0)_{13}&=\int_{0}^t\!d\tau\,\Big(\!\!\!\!\overbrace{t}^{\text{Edge }1\ot 2}\ast\overbrace{1}^{\text{Edge }2\ot 3}\!\!\ast~~\mG_{\mathcal{T};22}\Big)(\tau),&\text{Path}\PathThreeTwoOne\\[-2ex]
&=\int_0^t\!d\tau\!\int_0^{\tau} \!\!d\tau_1\!\int_{\tau_1}^{\tau}\!d\tau_2\,\tau\,\,\mG_{\mathcal{T};33}(\tau_1).
\end{align*}
Similarly,
\begin{align*}
 \OE[\H](t,0)_{21}&= \int_0^t\!d\tau\,\Big(1\ast1\ast\mG_{\mathcal{T};11}\Big)(\tau),&\text{Path}\PathOneThreeTwo\\
  \OE[\H](t,0)_{23}&=  \int_0^t \!d\tau\,\Big(1\ast\mG_{\mathcal{T}; 33}\Big)(\tau),&\text{Path}\PathThreeTwo\\
     \OE[\H](t,0)_{31}&=  \int_0^t\!d\tau\,\Big(1\ast\mG_{\mathcal{T}; 11}\Big)(\tau),&\text{Path}\PathOneThree\\
  \OE[\H](t,0)_{32}&=  \int_0^t\!d\tau\,\Big(1\ast t\ast\mG_{\mathcal{T};22}\Big)(\tau),&\text{Path}\PathTwoOneThree
\end{align*}
\end{subequations}
Calculating these (straightforward) integrals, the time-ordered exponential of $\H$ is found to be
\begin{align}
&\OE[\H](t,0)=\\
&\hspace{-4mm}\begin{pmatrix}
\vphantom{\Big(}Q_{\frac{1}{4},\frac{1}{2}}(t)&\frac{1}{2} t^2 \,Q_{\frac{3}{4},\frac{3}{2}}(t)&\frac{1}{2} t^3\, Q_{\frac{3}{4},\frac{5}{4}}(t)-\frac{1}{6} t^3 \,Q_{\frac{1}{4},\frac{7}{4}}(t)\\
t^2 \,Q_{\frac{1}{2},\frac{5}{4}}(t)-\frac{1}{2} t^2\, Q_{\frac{1}{4},\frac{3}{2}}(t)&Q_{\frac{1}{2},\frac{3}{4}}(t)&t\, Q_{\frac{3}{4},\frac{5}{4}}(t)\\
t Q_{\frac{1}{2},\frac{5}{4}}(t)&
\vphantom{\Big(}\frac{1}{2} t^3 \,Q_{\frac{3}{4},\frac{3}{2}}(t)-\frac{1}{3} t^3\, Q_{\frac{1}{2},\frac{7}{4}}(t)&Q_{\frac{1}{4},\frac{3}{4}}(t)
\end{pmatrix}\nonumber
\end{align}
where $Q_{a,b}(t)=\!\,_0F_2\left(;a,b;\,t^4/64\right)$.
These expressions exactly match the numerical solutions obtained by \emph{Mathematica} and \emph{Matlab}. We can also verify $\H(t)\OE[\H](t,0)=\frac{d}{dt}\OE[\H](t,0)$ holds numerically within a machine epsilon.
\end{example}
~\\

\begin{figure}[t!]
\vspace{-20mm}
\begin{center}
\includegraphics[width=.35\textwidth]{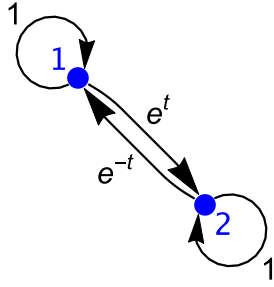}
\end{center}
\vspace{-5mm}
\caption{The complete graph on two vertices $\mathcal{K}_2$, the graph corresponding to the time dependent matrix $\H(t)$ of Example \ref{K2Example}. The edge-weights are indicated in black, next to the edges.}\label{fig:K2}
\end{figure}
\begin{example}\label{K2Example}
In this example, we consider the following system of differential equations on 
the interval $I=[0,T]$, $T>0$, for $t\in I$:
\begin{equation*}
\H(t)v(t)=\dot{v}(t),\qquad
\H(t)=\begin{pmatrix}
1&e^t\\
e^{-t}&1\\
\end{pmatrix}.
\end{equation*}
Again, this seemingly simple system is not analytically solvable by \textit{Mathematica}'s native DSolve function and $\H(t)$ does not commute with itself at different times. The graph $\G$ corresponding to $\H$ has the structure of the complete graph on 2 vertices with self-loops denoted $\mathcal{K}_2$, see Fig.~(\ref{fig:K2}).
According to Theorem \ref{PSresult}, the path-sum formulation for $\OE[\H]_{11}$ is 
\begin{align*}
&\OE[\H]_{11}(t,0)=\int_{0}^t \!d\tau\,\mG_{\G;11}(\tau),\\[1ex]
&\mG_{\G;11}=\bigg(1_\ast~~-\hspace{-2mm}\underbrace{1}_{\text{Self-loop}\LoopOneKTwo}\hspace{-2mm}-\underbrace{\overbrace{e^{-t}}^{\text{Edge }1\ot2}\ast~~\mG_{\G\backslash\{1\};22}~\,\ast\!\overbrace{e^t}^{\text{Edge }2\ot 1}}_{\text{Backtrack}\BackTrackOneTwo}\,\bigg)^{\ast-1}.
\end{align*}
Since the only cycle from $2$ to $2$ on $\G\backslash\{1\}$ is the self loop $2\to2$ with weight $1$, we have
\begin{equation*}
\mG_{\G\backslash\{1\};22}=\Big(1_\ast-\!\!\underbrace{1}_{\text{Self-loop}\LoopTwoKTwo}\Big)^{\ast-1}.
\end{equation*}
This quantity is easily obtained, $(1_\ast-1)^{\ast-1}(t',t)=\delta(t',t)+e^{(t'-t)}$. Then, the Neumann series for $\mG_{\G;11}$ gives
\begin{equation*}
\mG_{\G;11}(t)=\delta(t)+\frac{1}{10} e^{-\frac{1}{2} \left(\sqrt{5}-3\right) t} \left((5+\sqrt{5}) e^{\sqrt{5} t}+5-\sqrt{5}\right).
\end{equation*}
The entry $\OE[\H]_{11}(t,0)$ of the time-ordered exponential of $\H(t)$ then follows by integration. On using $\int_{0}^t \delta(\tau)d\tau=\Theta(0)=1$, by consistency with our definition of the Heaviside $\Theta$ function Eq.~(\ref{Heaviside}), we find 
\begin{align*}
&\OE[\H]_{11}(t,0)= e^{3t/2} \cosh
   \big(\sqrt{5}\,t/2\big)-\frac{e^{3 t/2}}{\sqrt{5}} \sinh\big(\sqrt{5}\,t/2\big).
\end{align*}
Similarly, since the only simple cycle from 1 to itself on $\G\backslash\{2\}$ is the self-loop $1\to 1$, we have $\mG_{\G\backslash\{2\};11}=(1_\ast-1)^{\ast-1}$ and
\begin{align*}
\OE[\H]_{22}(t,0)&=\int_{0}^t \!d\tau\,\bigg(1_\ast~-\hspace{-5mm}\underbrace{1\vphantom{\big[}}_{\text{Self-loop}\LoopTwoTwoKTwo}\hspace{-5mm}-~\underbrace{\overbrace{e^{t}}^{\text{Edge }2\ot 1}\ast~\mG_{\G\backslash\{2\};11}~\ast \overbrace{e^{-t}}^{\text{Edge }1\ot 2}}_{\text{Backtrack}\BackTrackTwoOne}\,\bigg)^{\ast-1}\!\!(\tau),\\[1.5ex]
&=e^{t/2} \cosh\big(\sqrt{5}\,t/2\big)+\frac{e^{t/2}}{\sqrt{5}}\sinh\big(\sqrt{5}\,t/2\big).
\end{align*}
The path-sum formulation for the off-diagonal elements of the time-ordered exponential is easily obtained from the simple paths $2\ot1$ and $1\ot2$. We have
\vspace{-2mm}
\begin{align*}
\OE[\H]_{12}(t,0)&=\int_{0}^t\!d\tau\,\big(\mG_{\G\backslash\{2\};11}\,\ast e^t\ast\,\,\mG_{\G;22}\big)(\tau),&\text{Path}\PathTwoOne\\
\OE[\H]_{21}(t,0)&=\int_{0}^t\!d\tau\,\big(\mG_{\G\backslash\{1\};22}\ast e^{-t}\ast\mG_{\G;11}\big)(\tau),&\text{Path}\PathOneTwo
\end{align*}
All together, the expression for the time-ordered exponential is
\begin{equation*}
\OE[\H](t)=e^{t/2}
\begin{pmatrix}
e^{t} a(t)-e^{t}b(t)&2\, e^{t} b(t)\\
   2\,b(t)& a(t)+b(t)
\end{pmatrix},
\end{equation*}
where $a(t):=\cosh\big(\sqrt{5}\,t/2\big)$ and $b(t):=\sinh\big(\sqrt{5}\,t/2\big)/\sqrt{5}$.
We verify analytically that this satisfies $\H(t)\OE[\H](t,0)=\frac{d}{dt}\OE[\H](t,0)$, as required.
\end{example}
~\\

\begin{example}[Time-dependent Dyson equation]
Let $\H(t)\in\mathbb{C}^{n\times n}[I^2]$ be a $2\times 2$ block matrix.
Let $\H_{\alpha\alpha}(t)$ and $\H_{\omega\omega}(t)$ designate the two diagonal blocks and $\G$ be the corresponding graph ($\mathcal{K}_2$), which has only two vertices, $\alpha$ and $\omega$. In this situation, $\H_{\alpha\alpha}(t)$ and $\H_{\omega\omega}(t)$ are the weights of the self-loops on vertices $\alpha$ and $\omega$, respectively. Similarly, $\H_{\omega\alpha}(t)$ and $\H_{\alpha\omega}(t)$ designate the off-diagonal blocks and are the weights of the edges $(\alpha\omega)$ and $(\omega\alpha)$, respectively.
Then Theorem \ref{PSresult} states that the Green's kernel at $\alpha$ is given by
\begin{align}\label{DysonInDisguise}
&\mG_{\mc{K}_2;\,\alpha}(t',t)=\Big[\I-\H_{\alpha\alpha}-\H_{\alpha\omega}\ast\mG_{\mc{K}_2\backslash\{\alpha\};\,\omega}\ast\H_{\omega\alpha}\Big]^{\ast-1},
\end{align}
and $\mG_{\mc{K}_2\backslash\{\alpha\};\,\omega}=(\I-\H_{\omega\omega}(t))^{\ast-1}$.
To show that this is the Dyson equation, we introduce 
\begin{subequations}
\begin{align*}
\mG&:=\mG_{\mc{K}_2;\,\alpha},\\
\mG_{0}&:=\big(\I-\H_{\alpha\alpha}(t)\big)^{\ast-1},\\
\Sigma&:=\H_{\alpha\omega}\ast\mG_{\mc{K}_2\backslash\{\alpha\};\,\omega}\ast\H_{\omega\alpha}\\
&\textcolor{white}{:}= \!\!\int_{t}^{t'}\!\!\!d\tau'\!\int_{t}^{\tau'}\!\!\!d\tau\,\H_{\alpha\omega}(t')\,\mG_{\mc{K}_2\backslash\{\alpha\};\,\omega}(\tau',\tau)\,\H_{\omega\alpha}(\tau).
\end{align*}
\end{subequations}
Eq.~(\ref{DysonInDisguise}) thus indicates that the Green's kernel $\mG$ obeys $\mG=(\mG_{0}^{\ast-1}-\Sigma)^{\ast-1}$, or equivalently,
\begin{equation*}
\mG=\mG_0+\mG \ast \Sigma\ast \mG_0.
\end{equation*}
This is the time-dependent Dyson equation \cite{Aoki2013}, which arises naturally from resummations of the Dyson series for the time-ordered exponential in the context of quantum many-body physics.
This equation appears for example when considering a physical entity (for example a particle, or an ensemble of sites in a solid) in contact with a large system. In this situation, the Hamiltonian driving the system + entity is naturally partitioned into four submatrices (blocks): $\H_{\alpha\alpha}(t)$, which drives the isolated entity, $\H_{\omega\omega}(t)$ which drives the rest of the system without the entity and $\H_{\omega\alpha}(t)$ and $\H_{\alpha\omega}(t)$, which represent the interactions between the system and the entity.

Hence we see that the Dyson equation stems from Theorem \ref{PSresult} on the complete graph on two vertices.
In general, Theorem \ref{PSresult} can be seen as extending the Dyson equation to an arbitrary number of systems/entities in contact with each other, and providing an \emph{explicit non-perturbative formula} for the self-energy that involves finitely many terms for finite systems.
\end{example}

\section{Decay properties}\label{decayprop}
In the last 15 years, a number of significant results have established exponentially decaying bounds for the magnitude of the entries of holomorphic functions of sparse matrices \cite{BenziNEW, Benzi1999, Benzi2007}. These results have given rise to a flurry of applications in linear algebra and physics as they underly efficient approximation techniques, see for example \cite{BenziNEW, Cramer2006, Shao2014}. As we shall see below, the techniques used to prove these results do not extend to the time-ordered exponential of time-dependent matrices which do not commute with themselves at different times.
In this section, we rely instead on Theorem \ref{MainRes} to establish the \textit{super-exponential decay} of the magnitude of the entries of the time-ordered exponential of sparse matrices.

\subsection{Super-exponential decay of the time-ordered exponential}
We begin by briefly recalling the existing results concerning holomorphic functions of sparse matrices. We follow the treatment of \cite{Benzi2007}.
Let $\M\in\mathbb{C}^{n\times n}$ be diagonalizable with spectral condition number $\kappa(\M)$. Let $f$ be an holomorphic function on an open subset $U$ of $\mathbb{C}$ containing the spectrum of $\M$. Let $\A_{\M}$ be the matrix defined by
\begin{equation*}
(\A_{\M})_{ij}:=\begin{cases}
1&\text{if }\M_{ij}\neq0,
\\0&\text{otherwise.}
\end{cases}
\end{equation*}
Let $\G_\M$ be the graph with adjacency matrix $\A_\M$, and let $d(i,j)$ be the length of the shortest path from vertex $i$ to vertex $j$ on $\G_\M$.
Then there exists positive constants $K$ and $0<\lambda<1$ such that \cite{Benzi2007}
\begin{equation}
|f(\matr{M})|_{j,i}\leq \kappa(\M) K \lambda^{d(i,j)}.\label{decaybound}
\end{equation}
This establishes the exponential decay of functions of $\M$ along its sparsity pattern $\G_\M$.
Ultimately, the proof of Eq.~(\ref{decaybound}) relies on the existence of a Cauchy integral representation for $f(\matr{M})$,
\begin{equation}\label{cauchy}
   f(\M) = \frac{1}{2\pi\,\mathrm{i}}\oint_{\Gamma} f(z)\,(z \matr{I}-\matr{M})^{-1}\,\mathrm{d}z,
\end{equation}
where $\rmi^2=-1$ and $\Gamma$ is a closed contour completely contained in $U$ that encloses the eigenvalues of $\matr{M}$. This representation is lacking in the case of the time-ordered exponential function of time-dependent matrices which do not commute with themselves at different times.

Since we cannot rely on a Cauchy representation to bound the entries of the time-ordered exponential of a sparse matrix, we turn instead to Theorem \ref{MainRes}.
The theorem establishes that the time continuous propagator $\T$ is proportional to  $(\I-\A)^{-1}$, which we have shown, can always be obtained from its Neumann series $\sum_{n\geq 0}\A^n$. Using the Taylor series remainder theorem then leads to the following bound for the entries of the time-ordered exponential of a sparse matrix:
\begin{proposition}\label{Propdecay}
Let $I\subset \mathbb{R}$ and $\H(t)\in\mathbb{C}^{n\times n}[I^2]$. Let $\matr{A}_\H$ be the adjacency matrix associated to $\H$ and $\G$ the corresponding graph. Let $d:=d(\alpha,\omega)$ be the length of the shortest path from vertex $\alpha$ to vertex $\omega$ on $\G$. Define $h:= \sup_{t\in I}\max_{\alpha,\beta}|\H_{\alpha\beta}(t)|$ and let $|\W_{\G;\alpha\omega;k}|$ be the number of walks of length $k$ from $\alpha$ to $\omega$ on $\G$
Then
\begin{subequations}
\begin{align}
|\OE[\H](t',t)_{\omega\alpha}|&\leq \sum_{k\geq d} \frac{h^k(t'-t)^k}{k!}|\W_{\G;\alpha\omega;k}|,\label{boundwalk}
\end{align}
with equality when $\matr{H}=h\, \matr{A}_{\H}$ is time independent. Let $\Delta$ be the maximum degree of any vertex of $\G$. If $\Delta$ is finite we also have the weaker bound
\begin{align}
|\OE[\H](t',t)_{\omega\alpha}|&\leq e^{\Delta h(t'-t)}\frac{\big(\Delta h(t'-t)\big)^{d}}{d!}.\label{genbound}
\end{align}
\end{subequations}
\end{proposition}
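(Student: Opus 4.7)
The plan is to bound each entry of $\OE[\H]$ by controlling the Neumann series that already appears in the proof of Theorem~\ref{MainRes}. By Eq.~(\ref{OEThm2}) we have $\Theta(t'-t)\OE[\H]_{\omega\alpha}(t',t) = \int_t^{t'}\Phi^{-1}\big((\I-\A)^{-1}\big)_{\omega\alpha}(t',\tau)\,d\tau$, and the absolute convergence of $\sum_{k\geq 0}\A^k$ shown in that proof lets me write $\Phi^{-1}\big((\I-\A)^{-1}\big) = \sum_{k\geq 0}\H^{\ast k}$. Using the walk-weight interpretation invoked in Eq.~(\ref{eq:fWalks2}), I get $(\H^{\ast k})_{\omega\alpha}(t',t) = \sum_{w\in\W_{\G;\alpha\omega;k}} \ew(w)(t',t)$. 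The key point is that $|\W_{\G;\alpha\omega;k}| = 0$ for $k<d$, which forces the sum to start at $k=d$.

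Next I would bound a single walk weight. For $w=(\mu_0,\ldots,\mu_k)$, the $\ast$-product $\ew(w) = \H_{\mu_k\mu_{k-1}}\ast\cdots\ast\H_{\mu_1\mu_0}$ unfolds by Eq.~(\ref{astproddef}) into a $(k-1)$-fold nested integral over the simplex $t\leq \tau_1\leq\cdots\leq\tau_{k-1}\leq t'$, whose volume is $(t'-t)^{k-1}/(k-1)!$. Using $|\H_{\mu_i\mu_{i-1}}(\tau)|\leq h$ for each factor gives $|\ew(w)(t',t)|\leq h^k(t'-t)^{k-1}/(k-1)!$. Integrating once more in $\tau$ from $t$ to $t'$ as required by Eq.~(\ref{OEThm2}) adds one more power, yielding $\bigl|\int_t^{t'}(\H^{\ast k})_{\omega\alpha}(t',\tau)\,d\tau\bigr|\leq h^k(t'-t)^k|\W_{\G;\alpha\omega;k}|/k!$. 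Summing over $k\geq d$ via the triangle inequality produces Eq.~(\ref{boundwalk}). For the equality case with $\H=h\,\matr{A}_\H$ time-independent, all walk weights are positive reals equal to $h^k$, no cancellations occur, and the remark following Theorem~\ref{MainRes} (matrix exponential) together with the Taylor series $e^{h\matr{A}_\H(t'-t)} = \sum_k h^k(t'-t)^k \matr{A}_\H^k/k!$ identifies $(\matr{A}_\H^k)_{\omega\alpha} = |\W_{\G;\alpha\omega;k}|$, giving equality.

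To obtain Eq.~(\ref{genbound}) I would crudely bound the walk count: each step of a walk out of a given vertex has at most $\Delta$ choices, so $|\W_{\G;\alpha\omega;k}|\leq\Delta^k$. Substituting into Eq.~(\ref{boundwalk}) gives $|\OE[\H]_{\omega\alpha}(t',t)|\leq \sum_{k\geq d} x^k/k!$ with $x:=\Delta h(t'-t)$. The remaining step is the elementary series manipulation
\begin{equation*}
\sum_{k\geq d}\frac{x^k}{k!} = x^d\sum_{j\geq 0}\frac{x^j}{(j+d)!} \leq \frac{x^d}{d!}\sum_{j\geq 0}\frac{x^j}{j!} = \frac{x^d}{d!}e^{x},
\end{equation*}
where I used $(j+d)!/j! = (j+1)\cdots(j+d)\geq d!$ for $j\geq 0$.

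The only real obstacle is the careful bookkeeping in step one: making sure the $\ast$-product of $k$ factors produces the correct simplex volume $(t'-t)^{k-1}/(k-1)!$ and that the outer integral in Eq.~(\ref{OEThm2}) supplies the extra factor that promotes this to $(t'-t)^k/k!$. Once that is settled, the walk-counting and the series manipulation are routine.
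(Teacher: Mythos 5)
Your proposal is correct and follows essentially the same route as the paper: expand the Neumann series for $(\I-\A)^{-1}$ entrywise as a sum over walks, bound each $\ast$-product of edge weights by $h^k$ times the simplex volume so that the outer integral from Eq.~(\ref{OEThm2}) yields $h^k(t'-t)^k/k!$ per walk, note that $\W_{\G;\alpha\omega;k}=\emptyset$ for $k<d$, and then use $|\W_{\G;\alpha\omega;k}|\leq\Delta^k$ together with the tail estimate $\sum_{k\geq d}x^k/k!\leq e^x x^d/d!$. Your explicit verification of the equality case via $(\matr{A}_\H^k)_{\omega\alpha}=|\W_{\G;\alpha\omega;k}|$ is a small addition the paper only asserts, but the argument is otherwise the same.
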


The bound of Eq.~(\ref{genbound}) demonstrates the super-exponential decay of the ordered-exponential function of any time-dependent sparse matrix, contingent on the assumption that $\Delta$ is finite. Furthermore, the result of the proposition is non-trivial only when the maximum distance $D:=\max_{\alpha,\omega}d(\alpha,\omega)$ between any two vertices $\alpha$ and $\omega$ on $\G$ is infinite. Otherwise, a super-exponentially decaying bound can always be found for any matrix, by choosing a large enough multiplying constant.

\begin{proof}
Let $\alpha$ and $\omega$ be (possibly identical) vertices of the graph $\G$ associated with $\H$. Proceeding with Theorem 3.1 we have
\begin{equation}
\Theta(t'-t)\OE[\H](t',t)_{\omega\alpha}=\int_{t}^{t'} \Phi^{-1}\big((\I-\A)_{\omega\alpha}^{-1}\big)(t',\tau)\,\,d\tau.\nonumber
\end{equation}
Since $\Phi$ is an homomorphism, $\Phi^{-1}\big(\A^{k}\big)(t',t)=(\matr{H}^{\ast k})(t',t)$ for $k\in\mathbb{N}$, and the Neumann series for $(\I - \A)^{-1}$ yields
\begin{equation*}
\Theta(t'-t)\OE[\H](t',t)_{\omega\alpha} = \int_{t}^{t'}\!\!d\tau\,\sum^{\infty}_{k=0} \big(\matr{H}^{\ast k}\big)_{\omega\alpha}(t',\tau).
\end{equation*}
Writing the $\ast$-products explicitly (see Appendix \ref{StarPowersInv}) we find
\begin{align}
&\Theta(t'-t)\OE[\H](t',t)_{\omega\alpha} = \int_{t}^{t'}\!d\tau\, \sum_{k=0}^\infty \,\sum_{w\in \W_{\G;\alpha\omega;k}}\label{StepInt}\\
&\int_{\tau}^{t'}\! \!d\tau_{1} \int_{\tau_1}^{t'}\!\!d\tau_2\cdots \int_{\tau_{k-2}}^{t'}\!\!\!d\tau_{k-1}\,\,\matr{H}_{\nu_k \nu_{k-1}}(t',\tau_{k-1}) \cdots  \matr{H}_{\nu_1\nu_0}(\tau_2,\tau_1)\matr{H}_{\nu_1\nu_0}(\tau_1,t),\nonumber
\end{align}
with $\W_{\G;\alpha\omega;k}$ the ensemble of walks of length $k$ from $\alpha$ to $\omega$ on $\G$ and $w=(\nu_0\cdots \nu_k)\in \W_{\G;\alpha\omega;k}$.
Now let $h:= \sup_{t\in I}\max_{\alpha,\beta}| \H_{\alpha\beta}(t)|$ and $|\W_{\G;\alpha\omega;k}|$ be the number of walks of length $k$ from $\alpha$ to $\omega$ on $\G$. Taking the norm on both sides of Eq.~(\ref{StepInt}) we obtain, assuming $t'\geq t$,
\begin{align}
|\OE[\H](t',t)_{\omega\alpha}| &\leq \sum_{k\geq0}\sum_{w\in \W_{\G;\alpha\omega;k}} h^{k} \int_{t}^{t'}\!\!d\tau\int_{\tau}^{t'}\! \!d\tau_{1} \int_{\tau_1}^{t'}\!\!d\tau_2\cdots \int_{\tau_{k-2}}^{t'}\!\!\!d\tau_{k-1}\, 1,\nonumber\\
&=\sum_{k=0}^\infty h^{k} \frac{(t'-t)^{k}}{k!} |\W_{\G;\alpha\omega;k}|.\label{SumK}
\end{align}
Recall that the distance from $\alpha$ to $\omega$ on $\G$ is $d(\alpha,\omega)$. Since this distance is the length of the shortest path from $\alpha$ to $\omega$, there exists no walk from $\alpha$ to $\omega$ of length strictly shorter than $d(\alpha,\omega)$. Consequently, the set $\W_{\G;\alpha\omega;k}$ is empty when $k<d(\alpha,\omega)$ and thus $|\W_{\G;\alpha\omega;k}|=0$ for $k<d(\alpha,\omega)$. Therefore, the sum over $k$ in Eq.~(\ref{SumK}) starts at $k=d(\alpha,\omega)$. This proves Eq.~(\ref{boundwalk}).

If the number of walks of length $k$ from $\alpha$ to $\omega$ is unknown, we bound it by powers of the maximum degree $\Delta$ of any vertex of $\G$: $|\W_{\G;\alpha\omega;k}|\leq \Delta^k$ \cite{Chen2014}. Then, the Taylor series remainder theorem gives
\begin{equation*}
|\OE[\matr{H}](t',t)_{\alpha \omega}| \leq \sum_{k= d}^\infty \frac{h^{k} (t'-t)^{k}}{k!} \Delta^k\leq e^{\Delta h(t'-t)}\frac{\big(\Delta h(t'-t)\big)^{d}}{d!}\\
\end{equation*}
This proves the proposition.\end{proof}

\begin{remark}[Matrices partitions]\label{blockremark}
Proposition \ref{Propdecay} extends to arbitrary partitions of the matrix $\H$ into blocks on using the formalism for matrix partitions presented in \cite{Giscard2013}. In this case, the proposition is unchanged except that the result becomes a bound on $\|\OE[\H](t',t)_{\omega\alpha}\|$, the norm\footnote{Any sub-multiplicative norm.} of the $(\omega,\alpha)$-block of the time-ordered exponential. Furthermore, $h$ is now $h:= \sup_{t\in I}\max_{\alpha,\beta}\| \H_{\alpha\beta}(t)\|$, where $\| \H_{\alpha\beta}(t)\|$ is the norm of the block of $\H$ weighting the edge from $\beta$ to $\alpha$ on $\G$.
\end{remark}
\subsection{Examples of super-exponential decay}
The general upper bound of Eq.~(\ref{genbound}) demonstrates \textit{super-exponential decay} of the time-ordered exponential of any finite time-dependent matrix $\H(t)$ as a function of the distance $d$ on the graph associated with $\H$, contingent on the assumption that $\Delta$ is finite.
%
%
%
Below we present explicit bounds for the ordered-exponential function of matrices with commonly encountered structures.
%
%
\begin{example}[Tridiagonal structure]
We consider the situation where $\H(t)$ has the structure of an infinite tridiagonal matrix at all times\footnote{As noted in Remark \ref{blockremark}, our results extend to the case where $\H(t)$ is block tridiagonal at all times.}. Then $\G$ is the infinite path-graph with a self-loop on each vertex. The number of walks of any length on this graph is $|\W_{\G;\alpha\omega;k}|= \sum _{j=0}^k \binom{k}{2j+d} \binom{2j+d}{j}$, with $d$ a shorthand notation for the distance $d(\alpha,\omega)$ from $\alpha$ to $\omega$. Eq.~(\ref{boundwalk}) then yields
\begin{equation}
|\OE[\H](t',t)_{\omega\alpha}|\leq e^{h(t'-t)} I_{d}\big(2h (t'-t)\big),\label{BesselTriDiag}
\end{equation}
with $I_n(z)$ the modified Bessel function of the first kind. This result is similar to that obtained by A. Iserles for the matrix exponential of tridiagonal matrices \cite{Iserles2000}. As expected, the above bound exhibits super-exponential decay with $d$ since $I_d(2x)\sim x^d/d!$ for $d\gg 1$.
As a comparison, the bound based on the maximum degree yields
$|\OE[\H](t',t)_{\alpha \omega}| \leq e^{2 h(t'-t)}\big(2 h(t'-t)\big)^{d}/d!$. For $d\gg 1$ this is larger than the bound of Eq.~(\ref{BesselTriDiag}) by a factor of $e^{2 h(t'-t)} 2^d$.
\end{example}

\begin{example}[Lattices]
We consider the situation where the graph $\G$ associated with $\H(t)$ has the structure of $\mathbb{Z}^\delta$, the infinite $\delta$-dimensional lattice with a self loop on each vertex ($\delta=1$ is the preceding case, $\delta=2$ is the square lattice with self loops etc.). Let $(0,0,\ldots,0)$ and $(a_1,\ldots,a_\delta)$ be the coordinates of $\omega$ and $\alpha$ on $\mathbb{Z}^\delta$, respectively, and let $d$ be the (Manhattan) distance from $\alpha$ to $\omega$. Then Eq.~(\ref{boundwalk}) gives
$$
|\OE[\H](t',t)_{\omega\alpha}|\leq e^{ h(t'-t)}\prod_{i=1}^{\delta}I_{a_i}\big(2h (t'-t)\big).
$$
This generalizes the super-exponential decay bound for tridiagonal matrices \cite{Iserles2000} to  $\mathbb{Z}^\delta$-structured matrices, $\delta\in\mathbb{N}\backslash\{0\}$. The bound is valid for both the exponential and time-ordered exponential functions. The weaker bound Eq.~(\ref{genbound}) is obtained with $\Delta=2\delta$ and
reads
\begin{subequations}
\begin{align}
|\OE[\H](t',t)_{\alpha\omega}|  \leq e^{2\delta h(t'-t)}(2\delta)^{d} \frac{h^d(t'-t)^{d}}{d!}.\nonumber
\end{align}
\end{subequations}
For $d\gg 1$, this is larger than the bound obtained using the number of walks by a factor $e^{(2\delta-1) h(t'-t)}(2\delta)^d$.
\end{example}

\begin{example}[Bethe lattices]
We consider the situation where the graph $\G$ associated with $\H(t)$ has the structure of the $N^\text{th}$ Bethe lattice (also known as an infinite Cayley tree). This is the infinite regular tree with degree $N\geq 2$. We denote this graph by $\mathcal{B}_N$.
The number of walks of length $2n+d$ on $\mathcal{B}_{N+1}$ is 
\begin{align*}
&|\W_{\mathcal{B}_{N+1};\,\alpha\omega;\,2n+d}|=\\
&\hspace{5mm}\sum _{k=d}^{n+d} \binom{2 n+d}{n+d-k} N^{n+d-k} \frac{2 k-d+1}{n+k+1}\leq(n+1) \binom{2 n+d}{n} N^{n} \frac{d+1}{n+d+1}.
\end{align*}
It follows by Eq.~(\ref{boundwalk}) the entries of the time-ordered exponential of $\H$ are bounded by
\begin{equation*}
|\OE[\H](t',t)_{\alpha\omega}|\leq \frac{e^{M}}{M}(d+1) N^{-\frac{d}{2}} \Big(I_{d+1}\big(2 M \big)+M\, I_{d+2}\big(2M\big)\Big).
\end{equation*}
where $M:= h(t'-t)\sqrt{N}$. For $d\gg1$, $I_d(2x)\sim x^d/d!$ and the above bound exhibits super-exponential decay, as expected. 
\end{example}

\subsection{Failure of super-exponential decay for infinite $\Delta$}
The second bound of Proposition \ref{Propdecay} for the entries of the time-ordered exponential of a sparse matrix $\H$, Eq.~(\ref{genbound}), requires the maximum degree $\Delta$ of any vertex of $\G$, the graph associated with $\H$, to be finite. In this section, we demonstrate a concrete example from quantum physics where $\Delta$ is infinite and Proposition \ref{Propdecay} yields an exponential rather than super-exponential decay for the entries of the time-ordered exponential of a time-dependent matrix.

\begin{example}[Exponential decay: time-dependent quantum spin systems]\label{hypercubex}
A one-dimensional quantum spin system is a chain of spins: particles which can only be in two states, called up and down. We consider the situation where the spins have fixed positions, e.g. at regular intervals from one another along a straight line, and denote these positions $1$ to $N$. Neglecting contacts with the environment, the system is described by a Hamiltonian, an operator representing its energy. For a finite number $N$ of particles, we denote this Hamiltonian $\matr{H}_N(t)$, and take it to be 
\begin{equation}\label{HamilSpin}
\matr{H}_N(t)=\sum_{i=1}^N \big\{\sigma_x^i+J(t) \sigma_z^i \sigma_z^{i+1}\big\},
\end{equation}
where $$\sigma_x=\begin{pmatrix}0&1\\1&0\end{pmatrix}, \quad \sigma_z=\begin{pmatrix}1&0\\0&-1\end{pmatrix},$$ are the Pauli matrices, $J(t)$ is a real-valued time dependent function and the sum runs over the spins, with $i+1$ being the left (or right) neighbour of the $i$th spin. This Hamiltonian, known as the 1D antiferromagnetic Heisenberg model, has been extensively studied in physics and remains the subject of active research \cite{Bougourzi1996, Mikeska2004, Rabhi2006}.
The time-ordered exponential of this Hamiltonian is desirable since it appears for example in the time-dependent quasi-equilibrium partition function $\mathsf{Z}(t',t):=\OE[-\H_N(t)/kT](t',t)$ with $k$ Boltzmann constant and $T$ the temperature, which dictates thermodynamic properties of the system \cite{Hummer2001,Jarzynski1997}.

Structurally, we verify that at all times, $\H_N(t)$ is the $N$th-hypercube $\mathcal{H}_N$, with one self loop on each vertex. The hypercube $\mathcal{H}_N$ has the property that both its maximum degree and the maximum distance $D_N$ between to vertices are equal to $N$, $\Delta_N=N$ ($\Delta_N=N+1$ when self-loops are present) and $D_N:=\max_{\alpha,\omega}d(\alpha,\omega)=N$. 
%
In particular remark that $\lim_{N\to\infty}\Delta_N/D_N=1$.

Using Proposition \ref{Propdecay}, we bound the entries of $\mathsf{Z}(t',t)$ for $N$ finite. We then study the behavior of the bound under the limit $N\to\infty$.
The number of walks of length $2n+d$ between any two vertices at distance $d$ from each other on the $N$-hypercube is given by
\begin{equation*}
|\W_{\mc{H}_N;\,\alpha\omega;\,2n+d}|=\frac{2}{2^N}\sum_{i=0}^{\lfloor N/2\rfloor}(2 i+p_N)^{2n+d}\sum_{j=0}^{\lfloor N/2\rfloor}\binom{N-d}{\lfloor \frac{N}{2}\rfloor-i-j}\binom{d}{j}(-1)^j.\label{walkshyper}
\end{equation*}
with $\lfloor .\rfloor$ the floor function and $p_N = N\!\!\mod 2$. Then, following Eq.~(\ref{boundwalk}), the bound for the entries the time-dependent quasi-equilibrium partition function $\mathsf{Z}(t',t)$ is
\begin{equation}\label{boundhyp}
|\mathsf{Z}(t',t)_{\omega\alpha}|\leq e^{\beta h (t'-t)}\sinh ^d\left(\beta h \left(t'-t\right)\right) \cosh ^{N-d}\left(\beta h \left(t'-t\right)\right),
\end{equation}
where $\beta:=1/kT$. The bound only exhibits exponential rather than super-exponential decay as a function of $d\leq D_N$. Indeed, let $x:=\beta h(t'-t)$, $\rho(x):=\sinh(x)/\cosh(x)<1$. The bound Eq.~(\ref{boundhyp}) is seen to decay proportionally to 
\begin{equation*}
\sinh(x)^d\cosh(x)^{N-d}\propto \cosh(x)^N \rho(x)^d.
\end{equation*}
In particular, when $x<1$, the bound decays as $\cosh(x)^{N} \rho(x)^d=x^r+O(x^2)$ with $r:=d/D_N$. In both case, the decay is only exponential.
This remains true in the infinite system limit $N\to\infty$. Since Eq.~(\ref{boundhyp}) is saturated in the (physically trivial) situation where $J(t)=0$, we are forced to conclude that in general, the time-ordered exponential function does not exhibit super-exponential decay when $\Delta$ is infinite.

Remark that the result of Eq.~(\ref{boundhyp}) is compatible with Eq.~(\ref{genbound}), since Eq.~(\ref{genbound}) gives
\begin{equation*}
|\mathsf{Z}(t',t)_{\omega\alpha}|\leq e^{(N+1)\beta h (t'-t)} \big((N+1)\, \beta h (t'-t)\big)^d/d!\,.
\end{equation*}
For any $N$ finite, $d\leq D_N$, and the above is larger than the bound of Eq.~(\ref{boundhyp}).
\end{example}

\section{Discussion: analytical and numerical calculations}\label{discuss}
The path-sum formulation of the time-ordered exponential presented in this article provides a new approach to obtaining exact solutions to systems of differential equations with variable coefficients. Since the full solution requires evaluating multiple $\ast$-inverses, each of which corresponds to solving a Volterra equation of the second kind or evaluating a Neumann series, a completely analytical solution appears intractable in many cases. However, well-established numerical methods for solving Volterra equations of the second kind exist \cite{Linz1985, Brunner1986}, and so the path-sum formulation presented here is expected to be accessible to numerical algorithms.

The interpretation of the time-ordered exponential as a sum of walk weights which underlies Theorem \ref{PSresult} opens the door to an alternative formulation of the problem, which may be even more suitable for numerical evaluations. The central idea is to exploit recent results regarding the algebraic structure of the set of walks on an arbitrary digraph \cite{Thwaite2014} to identify certain infinite geometric series of terms appearing in the sum of walk weights. These geometric series can then be exactly resummed, thereby reducing the sum of all walk weights to a sum over the weights of a certain (possibly infinite) subset of `irreducible' walks. Each term in this sum is modified so as to exactly include the contributions of the infinite families of resummed terms. The exact form of the irreducible terms remaining in the sum depend on the structure of the resummed terms: choosing a different family of terms produces a different series. In this perspective, the path-sum formulation presented in this work corresponds to the extreme situation where all possible resummations have been made.  

Instead, one may choose an intermediate situation where the series of walks is only partially resummed. This will leave infinitely many terms to sum over, however an advantage of this strategy is that these terms only contains `easy' $\ast$-inverses, more precisely $\ast$-inverses of polynomials. These intermediate formulations therefore form a promising starting point for developing numerical approaches to the time-ordered exponential. We will present examples of numerical calculations of the time-ordered exponential based on these graphical resummation techniques in a future work.

\section*{Acknowledgement}
\noindent P.-L.~Giscard and D.~Jaksch acknowledge funding from EPSRC Grant EP/K038311/1. D.~Jaksch also received funding from the ERC under the European Unions Seventh Framework Programme (FP7/2007-2013)/ERC Grant Agreement no. 319286 Q-MAC. K.~Lui was funded by the Bechtel Fund Summer Internship Award. S.~J.~Thwaite acknowledges funding from the Alexander von Humboldt Foundation. 

 \appendix
\section{Powers and inverses associated with the $\ast$-product}\label{StarPowersInv}
\subsection{Time-dependent matrices: $\ast$-powers}
The $\ast$-product naturally induces the notion of $\ast$-powers. Since $\Phi$ is an homomorphism, $\Phi(\matr{m}^{\ast n})=\matr{M}^n$ and the $\ast$-powers are given by the formula
\begin{align}\label{StarPowern}
&\matr{m}^{\ast n}(t',t)=\int_{t}^{t'} \!d\tau_{1} \int_{\tau_1}^{t'}\!d\tau_2\cdots \int_{\tau_{n-2}}^{t'}\!\!d\tau_{n-1}\, \,\matr{m}(t',\tau_{n-1})\cdots \matr{m}(\tau_2,\tau_1)\matr{m}(\tau_1,t),
\end{align}
for $[t',t]\subseteq I$. In the situation where $\matr{m}(t',t)=\matr{m}(t'-t)$, these integrals become nested convolutions and can be tackled in the Fourier or Laplace domains.

\subsection{Time-dependent matrices: $\ast$-inverses}\label{AstInverse}
The $\ast$-inverse $\matr{m}^{\ast-1}$ of a time-dependent matrix $\matr{m}$  is the time-dependent matrix solution of $\matr{m}\ast\matr{m}^{\ast-1}=1_\ast$, where $1_\ast:=\delta(t'-t)\,\I_n$ is the time-dependent identity.

Since all the inverses appearing in our work are of the form $(1_\ast-\matr{m})^{\ast-1}$, we give the corresponding defining equation:
\begin{equation}\label{InverseEquation}
\matr{m}\ast(1_\ast-\matr{m})^{\ast-1}=(1_\ast-\matr{m})^{\ast-1}-1_\ast.
\end{equation}
%
%
%
Such an inverse can be calculated from its Neumann series
\begin{equation}
(1_\ast-\matr{m})^{\ast-1}=\sum_{n}\matr{m}^{\ast n},
\end{equation}
as this series always converges. Indeed we have required that for any matrix $\matr{m}\in\mathbb{C}^{n\times n}[I^2]$, the quantity $\sup_{t_1,t_2\in I}\|\matr{m}(t_2,t_1)\|\leq b$ be bounded. Therefore, by Eq.~(\ref{StarPowern}), $\sup_{t_1,t_2\in I}\|\matr{m}^{\ast n}(t_2,t_1)\|\leq b^n/n!$ and $\sum_{n}\matr{m}^{\ast n}$ is absolutely convergent and $(1_\ast-\matr{m})^{\ast-1}$ exists.

Alternatively the inverse can be obtained on noting from Eq.~(\ref{InverseEquation}) that $\ast$-inverses are solutions to Volterra equations. More precisely, let $\matr{h}(t',t)$ such that $1_\ast(t',t)+\matr{h}(t',t)=(1_\ast-\matr{m})^{\ast-1}(t',t)$. Then
$\matr{h}$ solves a Volterra equation of the second kind:
\begin{equation}
\matr{h}(t',t)=\matr{m}(t',t)+\int_{t}^{t'}\!d\tau\,\,\matr{m}(t',\tau)\matr{h}(\tau,t).\label{VolterraInverse}
\end{equation}
In other terms, $(1_\ast-\matr{m})^{\ast-1}-1_\ast$ is the (matrix) Green's function of a Volterra equation with integral kernel $\matr{m}$.
Finally, note that since $\Phi$ is an isomorphism $\Phi(\matr{m}^{\ast-1})=\matr{M}^{-1}$ and the $\ast$-inverse corresponds to an inverse along the continuous time index.

\bibliographystyle{elsarticle-num}
\bibliography{Ordered_Exp}

\begin{thebibliography}{10}
\expandafter\ifx\csname url\endcsname\relax
  \def\url#1{\texttt{#1}}\fi
\expandafter\ifx\csname urlprefix\endcsname\relax\def\urlprefix{URL }\fi
\expandafter\ifx\csname href\endcsname\relax
  \def\href#1#2{#2} \def\path#1{#1}\fi

\bibitem{Golub}
G.~H. Golub, C.~F.~V. Loan, {Matrix Computations}, 4th Edition, Johns Hopkins
  University Press, 2012.

\bibitem{Higham2008}
N.~J. Higham, Functions of Matrices: Theory and Computation, 1st Edition,
  Society for Industrial and Applied Mathematics, Philadelphia, 2008.

\bibitem{Lam1998}
C.~S. Lam, {Decomposition of time-ordered products and path-ordered
  exponentials}, J. Math. Phys. 39 (1998) 5543--5558.

\bibitem{Blanes2009}
S.~Blanes, F.~Casas, J.~A. Oteo, J.~Ros, {The Magnus expansion and some of its
  applications}, Physics Reports 470 (2009) 151--238.

\bibitem{Georges1996}
A.~Georges, G.~Kotliar, W.~Krauth, M.~J. Rozenberg, {Dynamical mean-field
  theory of strongly correlated fermion systems and the limit of infinite
  dimensions}, Rev. Mod. Phys. 68 (1996) 13--125.

\bibitem{Aoki2013}
H.~Aoki, N.~Tsuji, M.~Eckstein, M.~Kollar, T.~Oka, P.~Werner, {Nonequilibrium
  dynamical mean-field theory and its applications}, Rev. Mod. Phys. 86 (2014)
  779.

\bibitem{Flajolet2009}
P.~Flajolet, R.~Sedgewick, Analytic Combinatorics, 1st Edition, Cambridge
  University Press, Cambridge, 2009.

\bibitem{Biggs1993}
N.~Biggs, Algebraic Graph Theory, 2nd Edition, Cambridge University Press,
  Cambridge, 1993.

\bibitem{Giscard2012}
P.-L. Giscard, S.~J. Thwaite, D.~Jaksch, {Continued fractions and unique
  factorization on digraphs}, arXiv:1202.5523.

\bibitem{Giscard2013}
P.-L. Giscard, S.~J. Thwaite, D.~Jaksch, Evaluating matrix functions by
  resummations on graphs: the method of path-sums, SIAM. J. Matrix Anal. \&
  Appl. 34 (2013) 445--469.

\bibitem{BenziNEW}
M.~Benzi, P.~Boito, N.~Razouk, Decay properties of spectral projectors with
  applications to electronic structure, SIAM Review 55 (2013) 3--64.

\bibitem{Benzi1999}
M.~Benzi, G.~H. Golub, Bounds for the entries of matrix functions with
  application to preconditioning, BIT 39 (1999) 417--438.

\bibitem{Benzi2007}
M.~Benzi, N.~Razouk, Decay bounds and \textit{{O}}$(n)$ algorithms for
  approximating functions of sparse matrices, Electron. T. Numer. Ana. 28
  (2007) 16--39.

\bibitem{Cramer2006}
M.~Cramer, J.~Eisert, Correlations, spectral gap and entanglement in harmonic
  quantum systems on generic lattices, New J. Phys. 8 (2006) 71.

\bibitem{Shao2014}
M.~Shao, On the finite section method for computing exponentials of
  doubly-infinite skew-hermitian matrices, Linear Algebra and its Applications
  451 (2014) 65--96.

\bibitem{Chen2014}
X.~Chen, J.~Qian, Bounds on the number of closed walks in a graph and its
  applications, Journal of Inequalities and Applications 2014 (2014) 199.

\bibitem{Iserles2000}
A.~Iserles, How large is the exponential of a banded matrix?, New Zealand J.
  Math. 29 (2000) 177--192.

\bibitem{Bougourzi1996}
A.~H. Bougourzi, M.~Couture, M.~Kacir, Exact two-spinon dynamical correlation
  function of the one-dimensional heisenberg model, Phys. Rev. B 54 (1996) R
  12669.

\bibitem{Mikeska2004}
H.-J. Mikeska, A.~K. Kolezhuk, Lecture notes in physics, Berlin: Springer 645
  (2004) 1--83.

\bibitem{Rabhi2006}
A.~Rabhi, P.~Schuck, J.~da~Provid\^{e}ncia, Random phase approximation for the
  1d anti-ferromagnetic heisenberg model, J. Phys.: Condens. Matter 18 (2006)
  10249.

\bibitem{Hummer2001}
G.~Hummer, A.~Szabo, Free energy reconstruction from nonequilibrium
  single-molecule pulling experiments, Proc. Natl. Acad. Sci. USA 98 (2001)
  3658.

\bibitem{Jarzynski1997}
C.~Jarzynski, Nonequilibrium equality for free energy differences, Phys. Rev.
  Lett. 78 (1997) 2690.

\bibitem{Linz1985}
P.~Linz, Analytical and numerical methods for Volterra equations, 1st Edition,
  Society for Industrial and Applied Mathematics (SIAM), Philadelphia, 1985.

\bibitem{Brunner1986}
H.~Brunner, P.~J. van~der Houwen, The numerical solution of Volterra equations,
  1st Edition, Oxford : North-Holland, Amsterdam, 1986.

\bibitem{Thwaite2014}
S.~J. Thwaite, {A family of partitions of the set of walks on a directed
  graph}, arXiv:1409.3555\href {http://arxiv.org/abs/1409.3555}
  {\path{arXiv:1409.3555}}.

\end{thebibliography}

\end{document}